\newtheorem{definition}{Definition}
\newtheorem{theorem}{Theorem}
\title{Implementing Support for Pointers to Private Data in a
  General-Purpose Secure Multi-Party Compiler}
\date{}
\author{Yihua Zhang\\
Department of Computer Science and Engineering\\
University of Notre Dame\\
yzhang16@nd.edu
\and Marina Blanton\\
Department of Computer Science and Engineering\\
State University of New York at Buffalo\\
mblanton@buffalo.edu
\and Ghada Almashaqbeh\footnote{Work done while at the University of Notre Dame.}\\
Department of Computer Science\\
Columbia University\\
ghada@cs.columbia.edu}
\begin{document}
\maketitle
\begin{abstract}
Recent compilers allow a general-purpose program (written in a
conventional programming language) that handles private data to be
translated into secure distributed implementation of the corresponding
functionality. The resulting program is then guaranteed to provably
protect private data using secure multi-party computation techniques.
The goals of such compilers are generality, usability, and efficiency,
but the complete set of features of a modern programming language has
not been supported to date by the existing compilers. In particular,
recent compilers PICCO and the two-party ANSI C compiler strive to
translate any C program into its secure multi-party implementation,
but currently lack support for pointers and dynamic memory allocation,
which are important components of many C programs. In this work, we
mitigate the limitation and add support for pointers to private data
and consequently dynamic memory allocation to the PICCO compiler,
enabling it to handle a more diverse set of programs over private
data. Because doing so opens up a new design space, we investigate the
use of pointers to private data (with known as well as private
locations stored in them) in programs and report our findings. Besides
dynamic memory allocation, we examine other important topics
associated with common pointer use such as reference by
pointer/address, casting, and building various data structures in the
context of secure multi-party computation. This results in enabling
the compiler to automatically translate a user program that uses
pointers to private data into its distributed implementation that
provably protects private data throughout the computation. We
empirically evaluate the constructions and report on performance of
representative programs.
\end{abstract}

\section{Introduction} 

Recent advances in secure multi-party computation make it feasible to
securely compute with private data belonging to different organizations even
for complex functionalities. Furthermore, together with ubiquitous
proliferation of cloud computing services, these techniques give rise to
secure computation outsourcing. For these reasons, the research community
has recently developed a number of compilers for transforming a
general-purpose program into the corresponding secure distributed
implementation (see, e.g., \cite{hol12,zha13}). These tools aim at
generality and are designed to translate a program written in a conventional
programming language into an equivalent program that uses secure computation
techniques to protect private data. They also aid usability and make it
easier for a programmer without extensive knowledge of secure computation
techniques to produce a protocol that can be securely executed in a
distributed environment.

It has been long known that any computable function can be securely
evaluated by multiple participants if it is represented as an arithmetic or
Boolean circuit. This representation, however, is not always obvious or
known or may otherwise significantly increase the program size. Existing
compilers remove the need for the programmer to perform this translation
manually and assemble secure implementations from efficient building blocks
for elementary operations. Thus, efficiency of the resulting secure
computation is also one of the goals that compilers target. Furthermore, the
ability to support both private (i.e., protected) and public (i.e., not
protected) data or variables in a single program adds a level of complexity
to the implementation because of the need to support interaction between
public and private variables and secure data flow enforcement.

While the design goal of several compilers was to support any feature
of a general-purpose programming language (such as C in \cite{hol12} and
\cite{zha13}), all such compilers we are aware of have limitations. In
particular, the original version of the PICCO compiler \cite{zha13} provided
no direct support for C pointers (i.e., pointers were supported only in the
form of static arrays) and, as a result, no support for dynamic memory
allocation other than static arrays. Similarly, the original 
version of the two-party compiler for ANSI C \cite{hol12} supported pointers
only in the form of statically allocated arrays restricted to a constant size
and had additional limitations (such as support for floating point
arithmetic was not available in the open source CBMC that the compiler
builds upon). Thus, support for C-like pointers -- or, in other programming
languages, support for the features that pointers enable such as dynamic
memory allocation, reference by pointer or address, and building data
structures -- is the most crucial part of a general-purpose program that is
currently unavailable in existing compilers. Adding this support is thus the
focus of our work.

In this paper, we extend the PICCO compiler \cite{zha13} with pointer
support. PICCO\footnote{Available from GitHub at
https://github.com/picco-team/picco.} is a source-to-source translator that
takes as an input a program written in the C programming language with
variables to be protected marked as private and produces a C program that
implements the computation using secure multi-party computation techniques
based on linear secret sharing. We view PICCO as an attractive compiler
choice because of the flexibility of the setting it uses. In particular, the
setting assumes three groups of participants: (i) input parties who hold
private inputs into the computation, (ii) computational parties who perform
secure computation on secret-shared data, and (iii) output parties who are
entitled to learning the result of the computation. The composition of these
three groups can be arbitrary (in particular, including the same,
overlapping, or non-overlapping groups), which makes the setting suitable
for secure multi-party computation (SMC), delegation of the computation by
multiple data owners to a subset of them or other suitable entities or
secure computation outsourcing by one or more parties. This flexibility
follows from the use of secret sharing techniques and may or may not be
present in tools that build on alternative secure computation techniques
(such as, e.g., garbled circuit evaluation).

With linear secret sharing, before secure computation can commence, each
input party splits her private inputs into $n > 2$ secret shares, where $n$
is the number of computational parties, and communicates each share to a
respective computational party. The computational parties then proceed with
evaluating the function on secret-shared data and communicate their shares
of the result to the output parties who reconstruct the output using their
shares. Any linear combination of secret-shared integers is performed
locally, but multiplication of secret-shared integers constitutes the
elementary interactive operation. Performance is then measured in the total
number of interactive operations as well as the number of sequential
interactions or rounds, and recent solutions based on secret sharing aim at
minimizing overhead using both metrics.

When PICCO is used to perform source-to-source translation, the input
program is a conventional C program where each variable is marked to be
either private or public. All computation with private variables is
transformed into secure arithmetic on shared data, while operations with
public variables that do not interact with private data are left unchanged.
In addition to specifying private/public qualifies for each variable, for
performance reasons PICCO also allows the programmer to mark the places
where computation can proceed concurrently (i.e., to decrease the number of
computation rounds), which also extends the conventional C syntax. 

Adding pointer to support to a program that manipulates private data not
only extends the compiler to handle the full range of C programs (that do
not violate secrecy of private data), but also permits important features of
programming languages, treatment of which, to the best of our knowledge, has
not been done before. As part of this work, we thus explore how pointers to
private data (including pointers with private locations) can be implemented
and discuss our design decisions. Having added support of
pointers to private data, we further study common uses of pointers in
programs and the impact our implementation has on those language features.
For example, we evaluate passing arguments by references, dynamic memory
allocation, and pointer casting. Based on our analysis as well as empirical
evaluation, several of these features introduce only marginal costs. Also, one
of the important topics studied in this is work is the use of pointer-based
data structures written for private data. Our results indicate that the use
of pointers (to private data) is very attractive and maintains high
efficiency for several popular data structures. In some other cases, in
particular when working with sorted data, privately manipulating pointers
increases complexity of data structure operations and it might be desirable
to pursue alternative implementations. 

We would like to emphasize that it is not the goal of this paper to try to
develop most efficient implementations for different data structures.
Instead, the goal is to determine how pointers to private data can be
supported at a low possible cost and to what performance of typical
programs that might lead. We note that, depending on the program structure,
asymptotic complexity of a translated program might be higher than that of
the original. For example, consider an if-then-else statement with a private
condition (e.g., conditional statements used in traversing a binary tree).
When data privacy is not required, only one of the two branches will be
executed, but with any compiler that produces a secure implementation both
branches will have to be evaluated to hide the result of the private
condition. Then with a sequence of $n$ nested if-then-else statements, in
the worst case the secure program might have to execute $O(2^n)$
instructions where the original program would execute only $O(n)$. This
means that the general translation approach can lead to an exponential
increase in the runtime for programs of practical relevance. As part of this
work we show that data structures that utilize pointers to private data
cover the entire spectrum of possibilities: in one extreme, they result in
no asymptotic increase over conventional non-secure counterparts, and in
another extreme, the increase is exponential. This provides insights on when
natural pointer use is very attractive and when other, alternative
implementations might be desired.

While for many data structures alternative, non-pointer-based
implementations may be possible, we note that our extension of PICCO with
pointers enables support for an important aspect of modern programs
otherwise not available in any secure multi-party compiler we are aware of,
which is dynamic memory allocation. Dynamic memory allocation is essential
for a general-purpose programming language, but has not been systematically
studied in the context of secure multi-party computation (e.g., even
publications and compilers that run secure computation on data sets of a
large size assume that the size of the data is fixed and known at
compilation time). As an example, consider an application in which data
items arrive over time. A pointer-based linked list will allow for a natural
and graceful mechanism of memory management, that unlike statically
allocated arrays does not require complex provisions for allocating a larger
buffer when the current one becomes full, moving the data, or merging
previously allocated buffers.

The rest of this paper is organized as follows: We first give a brief
overview of related work in Section~\ref{sec:rel-work}. After presenting
background information in Section~\ref{sec:background}, we proceed with
presenting our solution for supporting pointers to private data in
Section~\ref{sec3}. In Section~\ref{sec:pointer-uses}, we discuss common
uses of pointers in programming, such as passing arguments by reference,
dynamic memory allocation, array manipulation, and pointer casting and their
underlying implementation in our framework. Section~\ref{sec:analysis} next
summarizes operations with pointers to private data and formally shows
security of the design. Section~\ref{sec:data-structures} analyzes various
data structures built using pointers to private data. Lastly,
Section~\ref{sec:perf} presents the results of performance evaluation of
representative programs that utilize pointers (to private data) and
Section~\ref{sec:conclusions} concludes this work.

\section{Related Work}
\label{sec:rel-work}

In this section we review the most closely related work on SMC compilers and
secure/oblivious data structures. Regarding the compilers,
Fairplay~\cite{Malkhi04} was a pioneer work that enables compilation of
secure two-party protocols based on garbled circuits. Its extension to
multiple parties, FairplayMP~\cite{Ben08}, implements secure computation
using Boolean circuits and secret sharing techniques. TASTY~\cite{Henecka10}
is another two-party SMC compiler that combines garbled circuit techniques
with those based on homomorphic encryption. Sharemind \cite{Bogdanov08} and
VIFF~\cite{Damgaard09} are multi-party compilers based on custom additive
3-party secret sharing and standard threshold linear secret sharing,
respectively. All of the above compilers use custom domain-specific
languages to represent user programs. The two-party compiler for ANSI
C~\cite{hol12} and PCF~\cite{Kreuter13} both use two-party garbled circuit
techniques, where the former's goal is to support general purpose C
programs, while the latter uses a new circuit format and employs
optimizations to reduce the compilation time and storage. Lastly,
TinyGarble~\cite{son15} uses hardware synthesis to optimize garbled circuits
for two-party computation. All of these compilers require linear in the size
of memory work to access memory at a private location. SCVM~\cite{Liu14}, on
the other hand, is an automated compiler that utilizes oblivious RAM (ORAM)
and targets two-party computation. ObliVM~\cite{Liu15} is another ORAM-based
secure two-party computation compiler that transforms programs written in
high level abstractions to optimized garbled circuit implementations.
Finally, a recent compiler Frigate~\cite{moo16} was designed to guarantee
correctness of programs compiled into circuits for secure two-party
computation.

To support data structures in the SMC framework, several solutions
\cite{Toft11,Keller14,Mitchell14,Wang14} have been proposed. The main
motivation of this line of work is the need to store and manipulate
private data in an efficient and flexible manner. Toft~\cite{Toft11}
proposed a private priority queue that has a deterministic access
pattern as opposed to randomized ones in ORAM-based data structures. 
On the other hand, Keller and Scholl \cite{Keller14} introduced
implementations of arrays, dictionaries, and priority queues based on
various flavors of ORAM implementations. Mitchell and Zimmerman
\cite{Mitchell14} also provide implementations of stacks, queues, and
priority queues based on oblivious data compaction and an offline
variant of ORAM. Wang et al. \cite{Wang14} proposed implementations of
maps, sets, priority queues, stacks, and deques based on ORAM
techniques modified for specific data access patterns. Different from
all of these publications, our work includes extending the PICCO
compiler to support dynamic data structures in a generic way as found
in general purpose programming languages. That is, the programmer has
the basic tools and primitives that enable her to build any desired
data structure.

In our implementation, a pointer to private data may store one or more
locations where the data might reside, which in the worst case is linear in
the program's memory size. ORAM-based techniques, on the other hand,
guarantee that when an item is accessed at a private location, the number of
accessed memory locations is polylogarithmic in the total memory size. Thus,
our general solution may or may not be faster than using ORAM, depending on
both the program and the data size. As we discuss later in this work,
employing ORAM techniques can be beneficial for certain data structures (and
sufficiently large data sets). Building custom data structures, however, is
beyond the scope of this work. 

One of the applications that the compiler can naturally be used for
once support for pointers to private data is in place is evaluation of
a context-free grammar on private data (implemented as a shift-reduce
parser using a stack). The grammar can be either public or private,
and in the latter case execution will correspond to evaluation of
private expressions/programs on private data. Techniques for
evaluation of private programs (on private data) are a separate area
of research, discussion of which is beyond the scope of this work, but
the reader may refer to recent results in this areas such as those in
\cite{kis16,son16}.

\section{Background Information}
\label{sec:background}

PICCO uses Shamir secret sharing \cite{sha79} for implementing secure
arithmetic and other operations on private data. It is an $(n,t)$-threshold
linear secret sharing scheme, in which a private value is represented using
$n > 2$ secret shares, one held by each computational party. Then any $t+1$
or more shares can be used to reconstruct the private value, while $t$ or
fewer parties cannot learn any information about the shared value (which is
perfectly protected in the information-theoretic sense). In a linear secret
sharing scheme, a linear combination of secret-shared values can be
performed by each computational party locally, without any interaction,
while multiplication of secret-shared values requires communication between
all of them. With Shamir secret sharing, computation takes place over a
field of a desired size (larger than any value that needs to be
represented). A secret $s$ is represented using a random polynomial of
degree $t$ with the free coefficient set to $s$, and each share corresponds
to the evaluation of the polynomial on a distinct non-zero point. Given
$t+1$ or more shares, the secret can be reconstructed using Lagrange
interpolation. Then addition or subtraction of secret-shared values, or
multiplication of a secret-shared value by a known integer can be performed
by each party locally using its shares. Multiplication involves multiplying
two shares, which raises the corresponding polynomial degree to $2t$, and
resharing and interpolating the result to bring the degree of the
corresponding polynomial from $2t$ to $t$. This imposes the requirement that
$t < n/2$. With the way multiplication is performed, it is also possible to
evaluate any multi-variate polynomial of degree 2 over secret-shared
integers with a single interaction. That is, we first evaluate the
polynomial and re-share the overall result instead of doing so for the
intermediate products. This serves as a powerful optimization tool when, for
instance, the dot product of two secret shared vectors/arrays needs to be
computed, which results only in a single interaction.

While the regular field operations achieve perfect secrecy, implementation
of some of the basic operations used in PICCO (such as comparisons,
division, etc.) is statistically secure, which requires the bitlength of
the field elements to be increased by the statistical security parameter.
This slightly increases the cost of field operations, as well as the amount of
communication associated with transmitting field elements. The optimal size of
field elements is automatically determined by PICCO for each program it
compiles.

Performance of these techniques is measured in terms of the total number of
elementary interactive operations (field multiplications or reconstructions
of a value from its shares) as well as the number of sequential interactions or
rounds. For that reason, PICCO supports a number of optimizations to reduce
the round complexity of programs that it outputs through concurrent or
batch execution. 

Support for pointers to private data (and the corresponding functionalities
such as dynamic memory management) was the only missing functionality in
PICCO. Thus, we modify the compiler to enable it to compile user programs
that contain pointers to private data, which was not previously available.
Our changes to the compiler affect only pointers to private data and the
introduction of two built-in functions for memory allocation and
deallocation (called \texttt{pmalloc} and \texttt{pfree}) associated with
pointers to private data.

\section{Adding Pointer Support}
\label{sec3}

Recall that in C a pointer is a variable of a special type that stores a
location in memory at which data of a particular type can be located.
Because a pointer stores an address, it can be treated very generally with
the possibility of directly manipulating pointers, changing the addresses
they store, dereferencing a pointer to access the data to which it points,
casting a pointer of a particular type to a pointer of another type, and
using pointers to functions.

\subsection{Working Toward a Solution}

When working with pointers in the presence of private data, besides
traditional C pointers to public variables, we can distinguish between
pointers to private data that (i) point to a single known location where the
private data is stored and (ii) point to a memory pool or a number of
locations where private data is stored and the location of the private data
is not known. In determining how this can be implemented in a C-like
programming language, we considered pre-allocating memory pools for pointers
with private locations. Such memory pools would be required for each data
type to ensure that we can store and extract private data correctly. This
approach, however, has severe disadvantages, which are: 
\begin{enumerate}
  \item Using memory pools unnecessarily increases the program's memory
    footprint, where one pool will be needed for each used data type
    including complex types defined via the struct construct. Furthermore,
    it is not clear to what size each pool should be set to optimize
    performance. 
  \item It would also often incur unnecessarily large computation
    costs due to the need to touch all locations within a memory pool per
    single access (or touch several locations when the pool is implemented
    using more complex ORAM techniques). As will be evident later in the
    paper, there are large classes of programs, applications, and data
    structures, where a pointer to private data always corresponds to a
    single location, which removes the need to use secure multi-party
    computation techniques for pointer manipulation. Allowing a pointer to
    store a single known location drastically improves program performance
    compared to using pointer pools.
  \item Memory pools would also not work in the presence of pointer
    casting. 
\end{enumerate}

Then if we do not want a pointer to initially point to a pre-allocated memory
pool, would the decision to properly declare a pointer as pointing to a
single (known) location or a set of locations be left to the programmer?
This is going to introduce an additional burden for a programmer who would
need to know at a variable declaration time whether the variable of a
pointer type will require protecting its value. This happens if the pointer
is used inside a conditional statement with private condition, which then 
requires protecting the location assigned to the pointer to protect the
result of the condition evaluation. 

To ease programming burden and at the same time avoid consuming
unnecessary (memory and computation) resources, our solution is to use
the same programming interface for all pointers that are to point to
private data. When the pointer is being declared or initialized, it
has one known location associated with it (if the pointer is not
initialized, that location is set to the default value corresponding
to uninitialized pointers). Throughout the computation, the pointer,
however, may be pointing to multiple locations, one of which is
its true location. This happens when the pointer's value is modified
inside conditional statements with private conditions as illustrated
next. Suppose we declare variables \texttt{a} and \texttt{b} to be
private integers followed by the code below:

{\small \begin{verbatim}
1. private int *p;
2. p = &a;
3. if (priv-cond) then p = &b;
\end{verbatim}}
\noindent We see that variable \texttt{p} was declared as a pointer to
a private integer, but the type of the pointer with respect to whether
the location itself is private is implicit. After executing 
lines 1--2, \texttt{p} has a single known location,
but after executing line 3, \texttt{p} is associated with a list of
two locations (the address of \texttt{a} and the address of
\texttt{b}) and the value of the true location is protected. That
is, a pointer always starts with a single publicly known location and the
location to which it is pointing may become private, but the user does not
declare the pointer itself as public or private. In the
rest of this work, we use the term ``public location'' in reference to
a pointer to private data to mean that the pointer has a single known
location (either initialized or uninitialized) and we use the term
``private location'' to mean that the pointer has a list of public
locations, but which location is in use remains private.

When we consider interaction of public and private values in connection to
the use of pointers, a number of questions arise, which we address next.
\begin{enumerate}
\item \emph{Can a pointer that was declared to point at private data be
    assigned address of public data?} Note that without the use of pointers,
  the equivalent actions are generally allowed. That is, a variable declared
  to hold private data can be assigned a known value, which is consequently
  converted into protected form. The same does not hold for pointers and we
  disallow assigning locations of public variables to pointers which were
  declared to point to private data. To see why, suppose that a user program
  contains the code below where \texttt{a} was declared to be a private
  integer, while \texttt{b} is a public integer:
  
{\small \begin{verbatim}
1. p = &a;
2. if (priv-cond) then p = &b;
3. *p += 1;
\end{verbatim}}
After executing lines 1--2, \texttt{p} stores two addresses and the
true location of where it is pointing out of these two addresses is
protected. On line 3, however, the pointer is dereferenced and the
result of private condition \texttt{priv-cond} evaluation is revealed
by examining the value of \texttt{b} before and after line 3. Thus, to
eliminate information leakage, pointers to private data can be
assigned only locations that store private values.

\item \emph{Can a pointer declared to point to public data be modified
    inside conditional statements with private conditions and as a result
  become pointing to multiple locations?} The answer to this question is No.
  If a pointer to public data is updated in the body of a conditional
  statement with private condition, it must be treated as a pointer to
  private data (otherwise, using its dereferenced value reveals unauthorized
  information). Allowing such uses and performing the conversion implicitly
  by the compiler will be confusing to the programmer (who no longer can use
  the pointer to store addresses of public data). For that reason, we
  disallow updates to pointers to public data within the body of conditional
  statements with private conditions.
\end{enumerate}

\subsection{Pointer Implementation}
\label{sec:pimpl} 

We next proceed with describing how pointers to private data are
implemented to realize the ideas outlined above. We note that all
program transformations that we describe preserve semantics of the
original program and, given that a program can be compiled into the
corresponding secure implementation, the transformed program will
always produce the same output as the original program. There are some
restrictions that user programs must meet in order to be compiled into
secure implementations with no information leakage. Such restrictions
include the two cases at the interaction of public and private data
described above and additional restrictions inherited from PICCO
(e.g., the fact that the body of a conditional statement with a
private condition cannot have public side effects, a loop termination
condition should be public or made public, etc.). This is to ensure
that no information leakage in the compiled program can take place,
and the programs that do not meet the requirements are aborted at the
compilation time. Once these constraints are met, our extension of
PICCO will allow any user program to be compiled into its secure
counterpart.

\medskip \noindent \textbf{Pointer representation.}
As we incorporate support for pointers, we first note that pointers to
public data will not need to be modified and their implementation remains
the same as in the C programming language. The most significant change in
implementing pointers to private data comes from the need to maintain
multiple locations. For that reason, the data structure that we maintain
for pointers to private data consists of (i) an integer field that stores
the number $\alpha$ ($\ge 1$) of locations associated with the pointer; (ii)
a list of $\alpha$ addresses where the data is stored; and (iii) a list of
$\alpha$ private (i.e., secret-shared) tags, one of which is set to 1 (true
data location) and all others are set to 0. For the important special case of
$\alpha=1$, the pointer has known (public) location and the tags are not
used. 

We formalize the above pointer representation using the following invariant,
which is maintained throughout various pointer operations: \emph{among all
locations stored with a pointer to a private object, there is exactly one
true location of the object and the tag corresponding to that location is
set to 1, while the tags corresponding to all other locations are set to 0.}
This invariant is true of all well-formed programs and may be violated only
in the case of dangling pointers as detailed later.

Because we would like to employ a uniform data structure for pointers to
private data of any data type such as integer, floating point values, etc.
and even pointers to a pointer, the data structure we maintain needs to
include two additional fields: (iv) an integer flag that determines the type
of data associated with the pointer (i.e., integer = 1, float = 2, struct =
3, etc.) and (v) an integer field that indicates the indirection level of
the pointer. For instance, if a pointer refers to a private value of a
non-pointer type, its indirection level is set to 1; and if it refers to a
pointer whose indirection level is $k$ (for $k \ge 1$), its level will be
set to $k+1$. A pointer to a struct also has indirection level 1 regardless
of the types of the struct's fields (which can be pointers themselves). 

\medskip \noindent \textbf{Pointer updates.}
Initially, at the pointer declaration time, the number of locations $\alpha$
associated with the pointer is set to 1 and the address is set to to a
special constant used for uninitialized pointers. Then every time the
pointer is modified (including simultaneously with pointer declaration), its
data structure is updated. When the pointer is assigned a new location using
a public constant, a variable's address, or a memory allocation mechanism
(e.g., as in \texttt{p = 0}, \texttt{p = \&a}, or \texttt{p =
malloc(size)}), $\alpha$ in the pointer's data structure is set to 1 and the
associated address is stored in the pointer's address list. When a pointer
is updated using another pointer (as in \texttt{p = p1}), the latter's
data structure is copied and stored with the former. 

Such simple manipulations are used only when the assignment does not take
place inside the body of a conditional statement with a private condition.
Pointer assignments inside conditional statements with a private condition
present the most interesting case when the list of pointer locations gets
modified. Updating values modified in the body of a conditional statement
with a private condition already requires special handling in PICCO, and all
we need is to support a specific procedure when a variable of pointer type
is being modified. We need to distinguish between if-then and if-then-else
statements, which we consequently discuss. 

Consider the following code with an if-then statement:

{\small 
\begin{verbatim}
1. p = p1;
2. if (priv-cond) then p = p2;
\end{verbatim}}
\noindent where \texttt{p}, \texttt{p1}, and \texttt{p2} are pointers
(to private data) of the same type. This is the most general case,
where on line 2 both \texttt{p} and \texttt{p2} can have any number of
locations associated with each of them (recall that all other
assignment types use a single location). When this code is written for
ordinary (private) variables of the same type $a$, $a_1$, and $a_2$, a
generic way to implement this update in PICCO and similar compilers is
to first set $[a] = [a_1]$ and then compute $$[a] = [c] \cdot [a_2] +
(1 - [c]) \cdot [a] = [c] \cdot ([a_2] - [a]) + [a],$$ where $c$ is a
bit equal to the result of evaluating \texttt{priv-cond}. We use
notation $[x]$ to indicate that the value of $x$ is protected via
secret sharing and computation takes place on its shares. In the case
of pointers, such a simple update does not work because this procedure
would turn addresses into secret shared values preventing the pointer
from being dereferenced (without touching all possible memory
locations). Thus, after executing the assignment \texttt{p = p1}, we
combine the (public) locations of \texttt{p} and \texttt{p2} and set
the tags in \texttt{p} based on the current tags of \texttt{p} and
\texttt{p2} and the result $c$ of evaluating \texttt{priv-cond}. Let
pointer \texttt{p} after executing the first assignment contain
$\alpha_1$ locations stored as $L_1 = \{\ell_1, {\ldots},
\ell_{\alpha_1}\}$ with corresponding tags $T_1 = \{[t_1], {\ldots},
[t_{\alpha_1}]\}$ (i.e., this information was copied from
\texttt{p1}). Similarly, let pointer ${p_2}$ store $\alpha_2$, $L_2 =
\{\ell'_1, {\ldots}, \ell'_{\alpha_2}\}$, and $T_2 = \{[t'_1],
{\ldots}, [t'_{\alpha_2}]\}$. Note that the ordering of addresses in
each $L$ is arbitrary, but the tag $t_i$ in $T$ must correspond to the
address $\ell_i$ at the same position $i$ in $L$. Then as a result of
the conditional assignment, we compute \texttt{p}'s new content as
given in Algorithm~\ref{alg:cond-as}.
\begin{algorithm}[h]
\caption{$\langle \alpha_3, L_3, T_3 \rangle \leftarrow {\sf
    CondAssign}(\langle \alpha_1, L_1, T_1 \rangle, \langle \alpha_2$, $L_2,
  T_2 \rangle, [c])$} \label{alg:cond-as}
\begin{algorithmic}[1]
    \STATE $L_3 = L_1 \cup L_2$;
    \STATE $\alpha_3 = |L_3|$; 
    \FOR{every $\ell''_i \in L_3$}
    \STATE $pos_1 = L_1.{\sf find}(\ell''_i)$;
    \STATE $pos_2 = L_2.{\sf find}(\ell''_i)$;
    \IF{$(pos_1 \not= \perp$ and $pos_2 \not= \perp)$}
    \STATE $[t''_i] = [c] \cdot [t'_{pos_2}] + (1 - [c]) \cdot [t_{pos_1}]$;
    \ELSIF{($pos_2 = \perp$)}
    \STATE $[t''_i] = (1 - [c]) \cdot [t_{pos_1}]$;
    \ELSE
    \STATE $[t''_i] = [c] \cdot [t'_{pos_2}]$;
    \ENDIF
    \ENDFOR
    \STATE set $T_3 = \{[t''_1], [t''_2], {\ldots}, [t''_{\alpha_3}]\}$;
    \STATE return $\langle \alpha_3, L_3, T_3 \rangle$;
\end{algorithmic}
\end{algorithm}

In the algorithm, $L_3$ is composed of all locations
appearing in $L_1$ or $L_2$ (repeated locations are stored only once).
We use notation $L.{\sf find}$ to retrieve the position of the element
of $L$ provided as the argument or special symbol $\perp$ is the
element is not found. The tags in the output $T_3$ are set based on
three different cases: (i) a location in $L_3$ is found in both $L_1$
and $L_2$; (ii) it is found in $L_1$, but not in $L_2$; and (iii) it
is found in $L_2$, but not $L_1$. Because only tags in $T_1$ and $T_2$
and $c$ are private, only lines 7, 9, and 11 correspond to private
computation.
 
If the conditional statement is of the form if-then-else, but \texttt{p} is
not updated in the body of the else clause, then the computation in
Algorithm~\ref{alg:cond-as} is applied unchanged. If the pointer is instead
updated only in the body of the else clause, then the computation is
performed similarly, but Algorithm~\ref{alg:cond-as} is called with the
value of $1-c$ instead of $c$. 

Lastly, if the pointer is updated in both clauses of the if-then-else
statement, the pointer content prior to that statement needs to be
disregarded. The pointer values used in the two assignments are then merged
as in Algorithm~\ref{alg:cond-as} using the result $c$ of private condition
evaluation. To better illustrate this, consider the following code segment:

{\small \begin{verbatim}
1. p = p1;
2. if (priv-cond) then p = p2;
3. else p = p3;
\end{verbatim}}
\noindent After we assign \texttt{p1} to \texttt{p} on the first line,
\texttt{p}'s content is be overwritten with the content of either
\texttt{p2} or \texttt{p3} depending on the result $c$ of evaluating
\texttt{priv-cond}. We can see that before entering the if-clause, the
current content of \texttt{p} (i.e., that copied from \texttt{p1}) can be
safely disregarded without affecting its correctness. In other words, to
update \texttt{p} inside the conditional statement, we call ${\sf
CondAssign}(\langle \alpha_2, L_2, T_2 \rangle, \langle \alpha_3, L_3, T_3
\rangle, c)$ in Algorithm~\ref{alg:cond-as}, where $\langle \alpha_2, L_2,
T_2 \rangle$ and $\langle \alpha_3, L_3, T_3 \rangle$ are contents of
pointers \texttt{p2} and \texttt{p3}, respectively.

These constructions compose in presence of nested conditional statements with
private conditions. For instance, after executing the code:

{\small \begin{verbatim}
1. if (priv-cond1) then p = p1;
2. else
3.     p = p2;
4.     if (priv-cond2) then p = p3;
5.     else p = p4; 
\end{verbatim}}
\noindent \texttt{p} will contain the combined content of pointers
\texttt{p1}, \texttt{p3}, and \texttt{p4}. That is,
Algorithm~\ref{alg:cond-as} is first called with the content of pointers
\texttt{p3} and \texttt{p4} and the result $c_2$ of evaluating
\texttt{priv-cond2}, after which Algorithm~\ref{alg:cond-as} is called on
the result of its previous execution, the content of \texttt{p1}, and the
result $c_1$ of evaluating \texttt{priv-cond1}.

As evident from the description above, all modifications to variables of
all types (including pointers as well as data) inside conditional statements
with private conditions require special handling inside the compiler. For
each such conditional statement, PICCO examines the list of variables
modified inside the body of the statement and updates them differently from
when the modification is not surrounded by a private condition. Thus, in the
case of pointers we specify how pointers need to be updated inside such
statements using Algorithm~\ref{alg:cond-as} and compiler will process all
variables inside the body of conditional statements with private conditions.

Note that each pointer starts with a single location (i.e., $\alpha$
is set to 1) at the time of its declaration, and the list and the
number of locations $\alpha$ are updated during pointer assignments as
described above. This information is maintained only during program
execution and thus the locations that a pointer might store or their
number is not necessarily known at compile time.

\medskip \noindent \textbf{Pointer dereferencing.} When pointer \texttt{p}
with a private location is being dereferenced, its dereferenced value is
privately computed from $\alpha$, $L = \{\ell_1, {\ldots}, \ell_\alpha\}$,
and $T = \{[t_1], {\ldots}, [t_\alpha]\}$ stored at \texttt{p}. Let $[a_i]$
denote the value stored at location $\ell_i \in L$. Then we compute the
dereferenced value as $[v] = \sum_{i=1}^\alpha [a_i] \cdot [t_i]$. Note that
with linear secret sharing this operation can be implemented as an inner
product that costs only a single interactive operation resulting in a
profound impact on performance.

When the dereferenced value is being updated, all locations in $L$ need to
be touched, but the content of only one of them is being changed. If we, as
before, use $[a_i]$ to denote the value stored at $\ell_i \in L$ and let
$[a_{new}]$ denote the value with which the dereferenced value is being
updated, then we update the content of each location $\ell_i$ as $[a_i] =
[t_i] \cdot [a_{new}] + (1-[t_i]) \cdot [a_i]$. That is, the true location
($t_i = 1$) will be set to $a_{new}$, while all others ($t_i = 0$) will be
kept unchanged.

In the current form, the above procedures are applicable only to pointers
with the indirection level equal to 1. That is, if pointer \texttt{p} is
associated with a list of private locations of pointers, the above
computation will result in producing secret shared locations and the
information looses its semantic meaning. Thus, for pointers with indirection
level $>1$ different computation is used. That is, now each $\ell_i \in L$
stores an address of a pointer $p_i$ and let each $p_i$ be associated with
$\alpha_i$, $L_i = \{\ell^{(i)}_1, {\ldots}, \ell^{(i)}_{\alpha_i}\}$, and
$T_i = \{[t^{(i)}_1], {\ldots}, [t^{(i)}_{\alpha_i}]\}$. To retrieve the
dereferenced value of \texttt{p}, we first compute $[t_i] \cdot [t^{(i)}_j]$
for $1 \le i \le \alpha$ and $1 \le j \le {\alpha_i}$ and merge all lists
$L_i$ for $1\le i \le \alpha$. The resulting list is thus set to $L ' = L_1
\cup L_2 \cup {\cdots} \cup L_{\alpha}$ and let $\alpha' = |L'|$. For any
location in $L'$, we compute its corresponding tag as the sum of all $[t_i]
\cdot [t^{(i)}_j]$ values matching that location in the individual lists
$L_i$. (We can simply use the sum because only one tag can be set to 1.) The
result is $\alpha'$, $L'$ and the corresponding tags $T'$.

To illustrate this on an example, let $\alpha = 3$, $T = ([0], [0], [1])$,
and $L$ store the addresses of pointers $p_1$, $p_2$, $p_3$ with $\alpha_1 =
1$, $L_1 = (123)$, $T_1 = (1)$; $\alpha_2 = 2$, $L_2 = (189, 245)$, $T_2 =
([0], [1])$; and $\alpha_3 = 3$, $L_3 = (123, 176, 207)$, $T_3 = ([0], [1],
[0])$. The result of this operation is a pointer with $L' = L_1 \cup L_2
\cup L_3 = (123, 176, 189, 207, 245)$, $\alpha' = |L'| = 5$, and $T' = ([0],
[1], [0], [0], [0])$.

To update the dereferenced value of \texttt{p} through an assignment as in 
\texttt{*p = p'}, each pointer $p_i$ stored at address $\ell_i \in L$ needs
to be updated with \texttt{p'}'s information.  In particular, for each $p_i$
each tag $[t^{(i)}_j]$ (for location $\ell^{(i)}_j$) is updated to
$(1-[t_i]) \cdot [t^{(i)}_j]$. We also compute tag $[t_i] \cdot [t'_j]$ for
each location $\ell'_j$ in \texttt{p'}'s list of locations. We then merge
the location list of each $p_i$ with that of \texttt{p'} to form $p_i$'s new
list. For any new location inserted into $L_i$, its tag is set to the
computed $[t_i] \cdot [t'_j]$ for the appropriate choice of $j$, and any
location that appears on both $p_i$ and \texttt{p'} lists, the value
$[t_i] \cdot [t'_j]$ is added to $p_i$'s updated tag for that location. In
other words, if $t_i$ is true, we take \texttt{p'}'s value and otherwise
keep $p_i$'s value.

If pointer \texttt{p} with a private location is being dereferenced $m > 1$
times, the above dereference algorithms are naturally applied multiple times
with the first $m-1$ instances being the version that produces a pointer and
the last instance producing either a pointer or a private value depending on
\texttt{p}'s indirection level. \texttt{p} can then be treated as the root
of a tree with its child nodes being locations of pointers stored in its
list and the leaves of the tree eventually pointing to private data (of a
non-pointer type). To perform an $m$-level dereferencing operation, we
traverse the top $m+1$ levels of the tree and consolidate the values stored
at those levels (and update the values at the $(m+1)$st level if the
dereferenced value is to be updated). 

\medskip \noindent \textbf{Secrecy of pointers to private data.} As previously
discussed, the value of a pointer to private data is treated as public when
it stores a single location ($\alpha = 1$), and it is private otherwise
($\alpha > 1$). More generally, if pointers to private data are used in
predicates or similar expressions, the result of a predicate evaluation is
public if its outcome can be determined using only public data. For example,
the outcome of an expression that compares two pointers to private data for
equality is public if (i) both pointers store a single location in their
lists $L_1$ and $L_2$ or (ii) at least one of the pointers stores multiple
locations, but $L_1 \cap L_2 = \emptyset$. In other circumstances, the
outcome depends on private tags and is treated as private.

Note that when the result of a predicate evaluation on pointers is private,
it can be naturally computed by privately determining the true location of
each pointer and applying the predicate to them. This computation, however,
can be optimized for certain types of predicates to result in faster
performance. For example, in the case of pointer equality, the general
solution is to compute $[\ell_1] = \sum_{\ell^{(1)}_i \in L_1} \ell^{(1)}_i
[t^{(1)}_i]$ and $[\ell_2] = \sum_{\ell^{(2)}_i \in L_2} \ell^{(2)}_i
[t^{(2)}_i]$ and then compare $[\ell_1]$ and $[\ell_2]$ for equality, while
an optimized implementation computes $\sum_{\ell^{(1)}_i \in L_1 \cap L_2}
[t^{(1)}_i] [t^{(2)}_j]$, where $\ell^{(1)}_i = \ell^{(2)}_j$ for each
$\ell^{(1)}_i \in L_1 \cap L_2$. The latter can be implemented as the inner
product that costs a single interactive operation and is significantly lower
than the cost of comparing two private integers.

Because it is not always possible to determine at compile time whether a
predicate evaluated on one or more pointers (to private data) will have a
public or private status (which, for example, may depend on program's public
input), some checking will need to be deferred to run time. In particular,
if pointers to private data are used in a predicate to form a conditional
statement and the result of its evaluation is private, the usual constraints
for the body of conditional statements with private conditions apply. We
address this by evaluating the body of such conditional statements for
public side effects at compile time (as in the original PICCO design). If
the body contains public side effects, the transformed program will include
checks for the status of the conditional statement at run time. If the
result of evaluating the conditional statement is determined to be private
at run time (and public side effects are present in the body of the
statement), execution will be aborted with an error due to a possible
information leak.

Note that the fact whether the execution is aborted or not never leaks
private information, as this decision is solely based on public
data. That is, an abort takes place when (i) the result of evaluating
a predicate on two or more pointers is treated as private and (ii) the
body of the conditional statement that uses the predicate contains
public side effects. Whether the result of evaluating the predicate is
public or not is public knowledge because it is determined by the
public locations stored in the pointers and the predicate
itself. Similarly, whether the body of the conditional statement has
public side effects or not is based on the public code that forms the
body of the conditional statement.

\medskip \noindent \textbf{Optimizations.} Because the computation involved
in computing a pointer's dereferenced value is interactive (and thus is
relatively expensive) when the pointer stores multiple locations, we
considered caching and reusing its result. That is, when a pointer's
dereferenced value is computed, we can store it in the pointer structure and
reuse the value on consecutive dereference operations when there are no
changes to the values stored at all pointer's locations $L$ between such
operations. Once any value stored at one of the pointer's locations is
modified, the cached dereferenced value needs to be marked as out of date. A
similar caching technique can also be applied to the computation on private
tags that takes place during pointer update and dereference operations. Note
that private tags can be viewed as aggregation of conjunctions of 1-bit
private variables that denote the evaluation results of private conditions
(or their negations) in user programs. Because, once computed, the variables
will have fixed values, the conjunction results of those variables can be
cached in our framework in a lookup table and allow for their retrieval when
the same conjunctions need to be repeatedly computed. This optimization will
result in considerable savings when multiple private pointers are updated or
dereferenced within the body of a conditional statement with a private
condition. The savings due to either of the above caching optimizations are,
however, application-dependent and require additional program analysis at
program parse and translation time. Thus, these optimizations are not
presently a part of our implementation.

Another possible optimization can lower a program's memory footprint by
reusing data structures created for pointers to private data. In particular,
when a pointer is assigned another pointer's value (as in \texttt{p = p1}),
we could have both pointers pointing to the same data structure instead of
creating its copy. When, however, one of the pointers with the shared data
structure is being modified, it should be unlinked from the shared data
structure and its data structure modified accordingly. Implementing this
would require that each pointer data structure is associated with a list of
pointer variables which are using it. Furthermore, a data structure can be
reused only when all information stored in it is identical for multiple
pointers (i.e., not only the locations in $L$, but also the private tags in
$T$). Because different pointers often have distinct roles in user programs,
the expected savings are not very large. This optimization is presently not
a part of our implementation.

\subsection{Pointers to Struct}
\label{sec:struct}

We next discuss design and implementation of pointers to structs, including
their representation and the associated algorithms.  Pointers to complex
data types declared using struct constructs are common for building data
structures such as linked lists, stacks, and trees, and thus pointers to
structs deserve special attention. 

As before, if a complex data type contains no private fields, no
transformations are needed. However, when dealing with pointers to struct
with private fields, we need to address the following questions:
\begin{enumerate}
  \item A struct groups together a number of different variables that can be
    either private or public, but the complex data type itself declared
    using struct is not associated with any particular type of secrecy. When
    declaring a pointer to a complex data type, we thus need to
    determine if a pointer to it can be treated as a pointer to private
    data or if it has to be treated as a conventional pointer to a public
    variable.
  \item When designing representation of a private pointer that points to
    struct, we need to take into account the fact that fields of a complex
    data type can be accessed and modified independently of each other or
    the struct itself. Thus, it remains as a question whether we should
    maintain a separate list of addresses for each struct field or maintain
    only a single list of addresses for all possible struct variables
    associated with the pointer.
  \item The last question is whether we can reuse the previously described
    algorithms for working with private pointers for updating or
    dereferencing pointers to structs on the individual fields of a struct
    or if modifications are needed.
\end{enumerate}    
In what follows, we thus focus on answering these questions.
    
\medskip \noindent \textbf{Secrecy of pointers to struct.} Secrecy of a
pointer to struct is implicitly determined by the protection modes of the
struct's fields. We determined that a pointer to a complex data type can be
treated as a pointer to private data only if all fields in its declaration
are private. It means that if at least a single field of a struct is public,
pointers to this data type can be of public type only. This treatment is
necessary to eliminate information leakage when pointers to structs are
modified inside conditional statements with private conditions. Consider,
for example, a data type containing one private and one public field. If we
treat a pointer to this data type as a pointer to private data, it can be
modified inside an if-statement with a private condition and have multiple
locations associated with the pointer. However, by dereferencing and
observing the value of the public field, one can determine the true location
of the pointer and thus learn unauthorized information about the result of
the private condition.

Because a complex data type may contain other struct variables as its
fields, the variables in the data type will need to be checked recursively
to determine whether at least one public field is present (with provisions
to skip cycles in the declarations). If none are found, pointers to this
data type are treated as pointers to private data.

\medskip \noindent \textbf{Pointer to struct representation.} To implement
private pointers to structs, we needed to determine whether a single list of
locations is sufficient for all fields of the complex data type (recall that
all fields are private) or separate lists must be maintained. In working to
answer this question, we determine that there is no need to maintain
multiple lists of locations, because the list of locations associated with
each field in the struct must be the same (adjusted for the offset of the
field within the struct). That is, values of a struct's fields can be
modified individually (e.g., as in \texttt{p->x = y}), but the only way to
access or modify the location of a field is through the location of the
entire struct. In other words, the list of addresses associated with a
pointer to struct \texttt{p} (and thus the addresses corresponding to all of
its fields) can be modified only by directly updating \texttt{p}, as
operations of the type \texttt{p->x = y} do not affect the list of addresses
associated with the field \texttt{x}. Storing a single list has the added
benefit that we can employ the same representation of pointers to private
data as for simple data types. This treatment also implies that a pointer to
a struct object will have indirection level 1 even if all fields of the
struct are pointers themselves.

\medskip \noindent \textbf{Operations on private pointers to struct.} We
represent pointers to a struct record in the same way as other pointers.
This means that operations for using pointers and updating their values
remain unchanged. To dereference a specific field of a pointer as in
\texttt{p->x} and retrieve the value of the variable \texttt{x}, also only
minor changes to the previously described algorithms are needed. In
particular, all we need is to determine the offset $f$ of the variable's
address within the record and perform the dereferencing procedure in the
same way as for pointer \texttt{p} itself, but instead of using locations
$\ell_i$ from $L$, we use locations $\ell_i + f$. The same modification
applies to the case when the dereferenced value is modified through
assignment. 

If we would like to dereference \texttt{p} and retrieve the entire record as
in \texttt{rec = *p}, we need to iterate through each field of the struct
and retrieve the dereference value of each field as described above for
\texttt{p->x}. Similarly, to update a dereferenced pointer \texttt{p} as in 
\texttt{*p = rec}, we need to perform the equivalent of \texttt{p->x =
rec.x} for each field \texttt{x} of the struct.

\section{Pointer Uses in Programming} \label{sec:pointer-uses} 

In this section we discuss many common uses of pointers in programming and
how they are translated to our environment of computing with private data.
The topics we cover are passing arguments by reference, dynamic allocation
of memory, array manipulation, and pointer casting. Data structures also
constitute a common use of pointers, but we discuss them separately in
Section~\ref{sec:data-structures}.

\subsection{Passing Arguments by Reference}

Function calls contribute to the basic software engineering principles of
modular program design, but could be expensive in terms of stack memory
usage for the passed arguments. This has led to differentiating between
function calls where the arguments are passed by value and by reference. In
the latter case, the function typically takes a pointer to the argument and
all updates to the dereferenced pointer will be visible after completing the
function call (thus, arguments passed by reference can be used for either
input or output).

Passing private variables to functions by reference inherits the same
benefits as for conventional (public) variables in the programming language.
The good news is that no special provisions are needed for passing private
variables by reference, resulting in efficient implementations. Furthermore,
because often to pass an argument by reference, its address is supplied to a
function call (as opposed to supplying an existing pointer), the resulting
pointer will have a single known location. This allows us to enjoy the
benefits of avoiding using extra resources without the slowdown of working
with pointers with private locations. 

\subsection{Dynamic Memory Allocation}\label{sec:dmem}

Pointers are often used in programming to dynamically allocate memory on the
free store and deallocate it when it is no longer in use. Here we focus on
C-style \texttt{malloc()} and \texttt{free()} used with pointers to public
variables and show what modifications are needed to support dynamic memory
allocation with pointers to private variables.

\texttt{malloc()} in C allocates the requested number of bytes on the heap
which are passed as an argument to the function \texttt{malloc()}. The
result of this function is the address of the allocated variable or the
first array element in case of dynamic array allocation, which is stored in
a pointer. To support dynamic memory allocation for private variables, we
start with the following code in C: 

{\small \begin{verbatim}
1. int* p = (int*) malloc(sizeof(int));
2. int* p1 = (int*) malloc(10*sizeof(int));
\end{verbatim}}
\noindent Here \texttt{p} points to single variable, while \texttt{p1}
points to a dynamic array of size 10. The assignment operator directly saves
the malloc result into the pointer because they are of compatible types.
However, this is not the case for pointers to private variables because a
private pointer is represented using multiple fields. Consequently, we
cannot assign the malloc result directly to a private pointer and use a
modified interface for pointers to private variables. In particular, we use
a function \texttt{pmalloc}\footnote{Note that the choice of the function is
not crucial and it can be called \texttt{malloc} instead to simplify
programmers' effort for transforming an existing program to an equivalent
program that computes with private data. We, however, prefer to use
\texttt{pmalloc} to make it explicit that the computation refers to private
data.} to implement private malloc, which is invoked as:

{\small \begin{verbatim}
1. private int* p = pmalloc(10, private int);
\end{verbatim}}
\noindent As shown, \texttt{pmalloc} takes two arguments,
which are the requested number of dynamic variables and the data type. The
function returns the data structure used for private pointers in our
implementation with $\alpha = 1$ and the only location in $L$ set to the
address of the first variable in the allocated array (when the first
argument to the function is $>1$). Specifying the private data type is
necessary to properly allocate and initialize the memory. For example, in
PICCO a private integer is represented using one variable of type
\texttt{mpz\_t} from the GMP library \cite{gmp} and a private float is
represented using four \texttt{mpz\_t} variables. Once memory for the
necessary number of variables is allocated, each of them also needs to be
initialized before it can be used in computation.

Calling \texttt{free} with a pointer in C allows to deallocate the memory
(for either a variable or dynamic array) to which the pointer is pointing.
To support similar functionality for private variables, we implement a
function \texttt{pfree} that similarly takes a pointer (to a private
variable or dynamic array) as its only argument. With \texttt{pfree} we
distinguish between two different cases: the pointer provided as an argument
to the function has a single known location (i.e., $\alpha = 1$) or it has a
private location out of a public list ($\alpha > 1$).

Handling the first case is simple and efficient: we can simply call the
\texttt{free} command to deallocate memory associated with the address
stored in the pointer. Pointers to private data with public locations are
very common in programs that use pointers to private data or build data
structures from private data (e.g., linked lists, stacks). Freeing memory
used by pointers to private data in such cases is thus going to be extremely
efficient and does not introduce additional overhead. 

Handling the second case well, however, is very challenging. This is because
deallocating physical memory results in publicly observable outcomes, and
we must be extremely careful not to reveal the true location stored in a
pointer with a private location while at the same time reducing the
program's memory usage. For example, a simple strategy of deallocating
memory associated with all locations on a pointer's list of addresses will
not be acceptable for some programs. To illustrate this, consider a dummy
example with two pointers \texttt{p1} and \texttt{p2}, for each of which we
allocate memory using \texttt{pmalloc}. Then the locations to which the
pointers are pointing are swapped based on the result of a private condition
evaluation. We obtain that both \texttt{p1} and \texttt{p2} now contain two
identical locations in their lists of addresses, but their true addresses
are distinct. Suppose we process the data to which \texttt{p1} points and
want to deallocate the corresponding memory. If we deallocate both addresses
on \texttt{p1}'s list, \texttt{p2} becomes a dangling pointer and the data
to which it was pointing is no longer accessible. Thus, such an
implementation of \texttt{pfree} would be too restrictive to permit its
general use. 

Thus, calling \texttt{pfree(p)} should result in deallocating memory
associated with only one address on \texttt{p}'s list of addresses.
Furthermore, the address being deallocated cannot depend on any private data
(but can be any function of public data). This means that we are not
necessarily deallocating memory associated with the true location of the
pointer and other pointers that store the same location on their lists must
be adjusted to preserve correctness of the computation (which involves
additional resources). For example, we can choose to deallocate the fist
location $\ell_1$ on a pointer's list, but if this was not the pointer's
true location (which we can privately check), the data stored at $\ell_1$
needs to be relocated and other pointers storing $\ell_1$ on their lists
need to be updated accordingly. We next describe in more detail how we can
realize this idea.

First, if the pointer \texttt{p} on which \texttt{pfree} was called contains
the default location corresponding to uninitialized pointers on its list of
addresses $L$ (which is public knowledge), we choose not to perform memory
deallocation. This is to ensure that no memory is being deactivated (which
may be in use by other pointers) if \texttt{p} happens to be uninitialized.
Otherwise, we free the first location $\ell_1$ on \texttt{p}'s
list.\footnote{In the event that data stored at the locations contained in
the pointer have different sizes, the location with the data of the smallest
size should be chosen instead.} (Alternatively, the location used by the
smallest number of pointers can be freed.) Before we can actually free the
memory, we need to privately update the values stored at the remaining
locations in $L$ using the value stored at $\ell_1$ to maintain correctness.
We will need to ensure that (i) if $\ell_1$ happens to be the true location,
the values stored in the remaining locations will remain unchanged and (ii)
if $\ell_1$ is not the true location, the value stored at $\ell_1$ can be
found at \texttt{p}'s true location, while the values stored at all other
locations remain unchanged.  The rationale for doing this as follows: if
$\ell_1$ is indeed \texttt{p}'s true location, no additional work would be
required if this fact was public (i.e., it is the programmer's job to ensure
that freeing \texttt{p} does not affect other variables still in use). If
$\ell_1$, however, was not \texttt{p}'s true location, it may be in use by
other pointers and the value stored at $\ell_1$ needs to be relocated to
\texttt{p}'s true location prior to memory deallocation (and the pointers
that contain $\ell_1$ in their lists need to be updated accordingly).

Let \texttt{p} at the time of calling \texttt{pfree} store $\alpha$, $L =
\{\ell_1, {\ldots}, \ell_{\alpha}\}$, $T = \{[t_1], {\ldots},
[t_{\alpha}]\}$ and $A = \{[a_1], {\ldots}, [a_{\alpha}]\}$ denote values
stored at locations in $L$.\footnote{Although in the current discussion we
assume \texttt{p} is a private pointer that points to a non-pointer data
type, the same idea will apply when \texttt{p} points to a pointer. In
particular, if \texttt{p} points to a pointer, the procedure will include
merging the lists of pointers stored at locations $\ell_1$ and $\ell_i$ and
updating the tags similar to the formula for simple data types. Furthermore,
when \texttt{p} is a pointer to a struct, each field is updated separately
according to its type.} To obliviously update $[a_i]$'s for $2 \le i \le
\alpha$, we compute $$[a_i] = [a_1] \cdot [t_i] + [a_i] \cdot (1-[t_i]).$$ 
This satisfies the above two requirements as follows: if $t_1$ is true
($\ell_1$ is the true location) and thus $t_i$ is false, the result will be
$a_i$ for any $i$; if $t_i$ is true and thus $t_1$ is false, the result will
be $a_1$; if both $t_1$ and $t_i$ are false, the result will be $a_i$.
Surprisingly the formula does not depend on $t_1$.

Second, we need to update private pointers that store the freed location
$\ell_1$ in their lists (and are still in use), but no computation needs to
be performed for pointers that store any of $\ell_2, {\ldots}, \ell_\alpha$
from $L$, but not $\ell_1$ itself. The rationale for doing this as follows:
if $\ell_1$ is indeed \texttt{p}'s true location, no additional work would
be required if this fact was public (i.e., it is programmer's job to ensure
that freeing \texttt{p} does not affect other variables still in use). If
$\ell_1$, however, was not \texttt{p}'s true location, it may be in use by
other pointers and the value stored at $\ell_1$ is moved to \texttt{p}'s
true location prior to memory deallocation. We thus need to replace $\ell_1$
in other pointers' lists with locations that are guaranteed to include the
value originally stored at $\ell_1$ and update the locations' tags
accordingly. Thus, for each pointer \texttt{p'} that stores $\ell_1$ in its
list $L'$, we retrieve $\ell_1$'s position $pos$ in $L'$ and its
corresponding tag $t'_{pos}$. We then replace $\ell_1$ in $L'$ with
$\{\ell_2, {\ldots}, \ell_{\alpha}\}$ and $t'_{pos}$ in $T'$ with
$\{[t'_{pos}]\cdot [t_2], {\ldots}, [t'_{pos}]\cdot [t_{\alpha}]\}$. If any of
$\ell_i$ for $i = 2, {\ldots}, \alpha$ already appears in $L'$, that
location is not included the second time and its tag is set to the sum of
the tag already present in $T'$ for location $\ell_i$ and $[t'_{pos}] \cdot
[t_i]$.

Returning to our example with \texttt{p1} and \texttt{p2}, we have that
prior to calling \texttt{pfree(p1)}, \texttt{p1} stores $\alpha_1 = 2$, $L_1
= (\ell_1, \ell_2)$, $T_1 = (t_1, t_2)$, and \texttt{p2} stores $\alpha_2 =
2$, $L_2 = (\ell_2, \ell_1)$, $T_2 = (t'_1, t'_2)$. Then either $t_1 = t'_1
= 1$ and $t_2 = t'_2 = 0$ or $t_1 = t'_1 = 0$ and $t_2 = t'_2 = 1$. Once
\texttt{pfree(p1)} is called, $\ell_1$ is scheduled for deallocation. If
$t_1 = 1$, no changes take place; otherwise ($t_2 = 1$), the data from
location $\ell_1$ is copied into location $\ell_2$. We obtain that location
$\ell_1$ is being removed from $L'$ (and the corresponding tag $t'_2$ from
$T'$) and location $\ell_2$ is being added to $L'$ with the corresponding
tag $t'_2 \cdot t_2$. Because $\ell_2$ is already present in $L'$, it is
stored once and the tag becomes $t'_1 + t'_2 \cdot t_2$. Thus, we have that
$L'$ now stores a single location and the tag is 1 for any possible set of
original tags. 

Note that the second step of updating pointers that store location $\ell_1$
in their lists is more complex when the pointer \texttt{p} being freed
points to a struct or an array. In those cases, multiple addresses are
processed in this step ($\ell_1$ and other valid addresses that store data
at fixed offsets from $\ell_1$) depending on the type of data to which
\texttt{p} points. In our implementation, we gather all interactive
operations associated with the execution of a call to \texttt{pfree} and
perform them simultaneously in a single round.

If the user program is written correctly (i.e., does not leave dangling
pointers after a call to \texttt{free}), our implementation of
\texttt{pfree} will maintain that for each pointer exactly one location's
tag is set to 1 and all other locations' tags are set to 0. When, however, a
call to deallocate memory corresponding to a pointer results in dangling
pointers, all tags in such pointers can be 0. For that reason, if a call to
\texttt{pfree} causes the number of addresses for some pointer to reduce
to 1, we do not treat the corresponding tag as public. That is, when a
program is not correctly written, opening the value of the tag may reveal
private information, while assuming that the tag is 1 may modify the program's
behavior. Thus, our implementation maintains privacy even in the presence of
programming errors that result in dangling pointers.

We also note that the use of \texttt{pmalloc} or \texttt{pfree} will not be
allowed inside conditional statements with private conditions because these
functions have public side effects.

\subsection{Accessing Array Elements}

The next common use of pointers in programming is manipulating arrays using
pointers. Even for statically allocated arrays, the array name is treated as
a constant pointer that points to the first element of the array. Hence,
arrays and pointers are tightly coupled and pointers are used extensively to
work with arrays.

\medskip \noindent \textbf{Array indexing.} Because arrays are based on
pointers, array indexing also applies to pointers. Thus, we can see
constructions such as \texttt{p = a} and \texttt{p[i]}, where \texttt{p} is
a pointer and \texttt{a} is an array, and need to support them for pointers
to private data. Pointer indexing \texttt{p[i]} with a pointer \texttt{p} to
private data and a public index \texttt{i} is implemented naturally, where
we iterate through all locations in the address list $L$ of \texttt{p},
advance each of them by \texttt{i} multiplied by the size of the data type,
retrieve the data at the determined positions, and combine all of them using
private tags for each location to obtain the result. In other words, the
computation is very similar to that of pointer dereferencing, where instead
of retrieving data at the positions specified in $L$, we advance each
position by \texttt{i} data items. (As C permits the use of negative
indices, when \texttt{i} in \texttt{p[i]} is negative each location in $L$
is decremented by the necessary amount during this operation.)

\medskip \noindent \textbf{Pointers as arrays with known bounds.} In PICCO,
statically allocated arrays of private variables have the array size stored
with them (which is known at the array creation time). Knowing the size of
the arrays allows the compiler to support of a number of important
operations on arrays. Most significantly, this permits the use of private
indexing with arrays, when an element at a private position \texttt{i} is
retrieved from an array \texttt{a} using syntax \texttt{a[i]}. (The size of
the array must be known to support private indexing, regardless of what
technique is used to implement it.) This also permits the use of other
operations such as inner (or dot) products on two arrays, which were
introduced to optimize runtime of compiled programs.

We treat private indexing as an essential part of secure computation with
private data and would like to see it supported for arrays dynamically
allocated on the heap. This means that we would like to offer pointer
indexing \texttt{p[i]} with private \texttt{i} and private pointer
\texttt{p}. The main challenge that we need to overcome is the fact that the
size of the memory pointed by \texttt{p} is not available in C. Furthermore,
a location stored in \texttt{p} may be arbitrary and do not correspond to a
valid memory address (i.e., be unaccessible by the program, correspond to
memory marked as not being in use or any location from the program's stack,
etc.). This means that a pointer can take on many addresses which were not
allocated for variable use and for which the corresponding size cannot be
meaningfully determined (i.e., accessing such addresses would trigger
invalid memory access exceptions in safe programming languages). The size of
properly allocated memory, however, can be determined and utilized to
implement private indexing (and other operations that require array size)
with pointers to private data. In particular, all memory that
\texttt{malloc} allocates on the heap is marked with the size of each
allocated block. Thus, we can use the information that
\texttt{malloc}/\texttt{free} maintain to determine whether a pointer
content falls within a properly allocated memory block, and if it is the
case, access the block's size and use it to implement private indexing.

In more detail, in addition to using private indexing with statically
allocated arrays (as already implemented in PICCO), we permit private
indexing to be used with pointers to private data. The latter is only
successful if the location stored in the pointer\footnote{The current
discussion refers to a single location stored in a pointer, which we view as
the most common use of private indexing. When the pointer contains multiple
locations, the operation is performed on each location separately and the
results are combined in the same way as during pointer dereferencing.} was
allocated via a prior call to \texttt{pmalloc} (and it was not deallocated
during a call to \texttt{pfree}). Because the secure implementation that
PICCO produces makes more calls to \texttt{malloc} than once per call to
\texttt{pmalloc}, the program internally maintains a list of addresses
returned by \texttt{malloc} that correspond to memory requested by the user
program (and an address is taken off the list if it is being freed). Then
when private indexing \texttt{p[i]} is called in the user program and the
pointer stores address $\ell$, we iterate through the list of maintained
addresses. For each such address $l$, we retrieve the corresponding block
size $s$ from the information stored by \texttt{malloc} and check whether $l
\le \ell < l+s$ and the offset of $\ell$ from $l$ is a multiple of the data
type size. If these checks succeed for at least one location on the
allocated address list, $s$ is adjusted for the data type size and is used
as the size of the array to which \texttt{p} points. Note that with this
implementation $\ell$ does not have to correspond to the beginning of the
memory block. Then when $\ell$ is not the address of the beginning of the
array, \texttt{i} can legitimately take negative values.

Under the circumstances when the address $\ell$ does not fall within
any memory block dynamically allocated by the user, private indexing
operation is not performed and the returned result is set to be
secret-shares of 0 (note that, regardless to what value the result is
set, it is not guaranteed to be interpreted as an error). We thus
proceed with the computation despite the error, but send signal
\texttt{SIGBUS}\footnote{Alternatively, custom \texttt{SIGUSR1} or
\texttt{SIGUSR2} can be triggered if the user program is known not to
use it.} and store an error message in a fixed location, so that the
program can catch the signal and act on it. We note that the address
that each call to \texttt{pmalloc} returns is always public
information and the programmer can avoid using invalid addresses.
Ideally, the fact that the private indexing operation cannot be
carried out on the given address is determined before the program is
run, at compile time. Unfortunately, this will not always be possible
and for some incorrectly written user programs the error will not be
triggered until the program is executed (i.e., even programming
languages that perform static analysis of user programs do
array-bounds checking dynamically). The best we can do is to perform
static program analysis at compile time and warn the user about places
where such an error might be possible. 

\medskip \noindent \textbf{Pointer arithmetic.} Pointers can be modified
by setting the address to which they point to the result of an arithmetic
expression evaluation. While in C pointers can be used in arbitrary
expressions similar to the way integer variables are used, only a limited
set of operations on pointer variables is meaningful when they are used to 
store and manipulate addresses within the program. For example, pointer
arithmetic can be relied upon to increment or decrement a pointer value by
an integer amount to move to a different position within an array or between
struct fields. Many other arithmetic operations on pointer variables are not
meaningful, and moving between different variables using pointer arithmetic
is unreliable and error-prone. Thus, in PICCO's default configuration we
chose to disable pointer arithmetic involving pointers to private data in
user programs that the compiler processes. We introduce this as a mechanism
for eliminating a large class of programming errors without constraining
expressiveness of user programs. That is, if we want to change the pointer's
position within an array, instead of using \texttt{p = p-i} or \texttt{p =
p1+4*k+1}, the program will be written as \texttt{p = \&p[-i]} and \texttt{p
= \&p1[4*k+1]}, respectively. We note that disabling pointer arithmetic for
pointers to private data in the default configuration should not be treated
as a limitation of the compiler or our approach, but was a deliberate choice
to reduce programming errors without constraining expressiveness of user
programs.

Nevertheless, turning off pointer arithmetic in PICCO makes it deviate from
standard C. Furthermore, there is a small class of functionalities that
become disabled without pointer arithmetic. One example of such
functionalities is the use of embedded linked lists as, for example,
implemented in the Linux kernel. Embedded linked lists might rely on pointer
arithmetic to move between different fields of a struct. Thus, if the need
for this or similar functionality when working with private data arises in
applications that use PICCO, the compiler can be configured with a
command-line flag to enable support for pointer arithmetic involving
pointers to private data. We implement such functionality as described below.

As mentioned before, in regular C, pointers can store any integer values and
using pointer variables in arithmetic expressions will result in evaluating
the expressions on the integer values stored in such variables. In the
context of pointers to private data, we however distinguish between pointer
objects and (integer) addresses that pointer objects store. Thus, pointer
content is no longer equivalent to integer values, and we support pointer
arithmetic with pointer objects only for the purposes of using pointers to
store and manipulate addresses. Such arithmetic operations can be
categorized into two groups:

\begin{enumerate}
  \item Using pointers to private data in expressions of the type \texttt{p
      + exp} and \texttt{p - exp}, where \texttt{exp} is an expression that
      evaluates to a (public) integer value. This operation produces the
      same output as executing \texttt{\&p[exp]}, i.e., all locations stored
      in \texttt{p} are advanced by the amount of space occupied by
      \texttt{exp} elements of the array.
  \item Using pointers to private data in offset computation as in
    \texttt{p1 - p2}. This operation is straightforward to implement when
    both pointers store a single location. When, however, at least one of
    them has multiple locations, our implementation computes the private
    difference between the true locations of the pointers. This option, in
    our opinion, implements the right semantic value as opposed to other
    variants (such as computing pair-wise differences between all addresses
    stored in the pointers) which bear little meaning.
\end{enumerate}
Note that expressions of the type \texttt{p + exp}, and equivalently
\texttt{\&p[exp]}, where \texttt{exp} evaluated to a private integer, are
not meaningful and not supported. 

\subsection{Pointer Casting}

Variable casting refers to the ability to treat a variable of one type as a
variable of another type. Casting a constant or variable of one type to a
constant or variable of another type typically results in the value being
preserved after the conversion (if possible) even if the two types use
different data representations. This means that conversion is likely to
involve computation. In PICCO, conversion between floating point and integer
values is based on the algorithms given in~\cite{ali13}, while conversion
between integer types of different sizes and floating point types of
different sizes requires minimal to no work (assuming no overflow or
underflow detection is required when casting a value to a shorter
representation).

Pointer casting is handled differently and C is unique in the sense of
allowing pointer-based in-memory casting from one data type to another.
Pointer casting involves no data conversion: the memory is read as is and is
interpreted as a sequence of elements of another type. Thus, pointer casting
is meaningful between a limited number of data types. In order to support
pointer casting in PICCO, we need to resolve the main question: because data
representation of private data types differs from data representation of the
corresponding public data types, we need to determine how to mimic sizes of
public data types when working with blocks of private data without modifying
the data itself. That is, all secret shared values in PICCO are represented
as elements of the same field, which means that, for example, shares of a
16-bit integer and shares of a 64-bit integer have the same bitlength. A
programmer who casts memory storing an array of 64-bit integers to a pointer
to an array of 16-bit integers, however, expects to extract four 16-bit
integers from each 64-bit integer. This means that to meet the programmer's
expectations, private data will need to be processed and assembled in a
different form. We, however, cannot modify the original data because only
the pointer was cast, not the data itself.

Instead of duplicating the memory and performing conversion at the
time of casting, our solution is to do the necessary computation at
the time of pointer dereferencing. This means that we need to record
information about the data type from which casting was performed (to
the data type of the pointer) at the time of casting, but delay
conversion until the pointer is dereferenced. We store casting data
type information with the pointer and use it to extract the relevant
portion of the memory at pointer dereferencing time. Note that in
presence of a sequence of casts, only a single data type needs to be
maintained because the memory layout does not change. 

Because in PICCO simple data types can be defined to
have any bitlength, casting, for example, a pointer of one integer type to a
pointer of another integer type does not guarantee that one data type will
have a bitlength multiple of another. In that case we still calculate what
the relevant portion of the memory is based on the position of the memory
being dereferenced, but the last, partially filled, element might not be
reliably extracted. For example, suppose some memory was filled as a
3-element 30-bit integer array. When it is cast to an array of 20-bit
integers, the fourth elements will be extracted as bits 61--80 of the
original data, while retrieving the fifth element might result in memory
violation because there is not enough data in the original array to fully
form that element.

\subsection{Pointers to Functions}

Similar to pointers to ordinary data types, we need to distinguish between
pointers to functions that will be treated as pointers to private data and
pointers to functions that will be treated as pointers to public data. The
former can be used inside conditional statements with private conditions (as
a result of which they acquire multiple locations and the true location
becomes private) and are restricted to functions with no public side
effects. The latter can contain pointers to functions of any type, but
cannot be modified or dereferenced inside conditional statements with
private conditions. The distinction is made at the time of pointer
declaration using private/public qualifiers with \texttt{void} data type.
That is, by using \texttt{private void *p}, \texttt{p} will be treated
similar to other pointers to private data, while all pointers declared
syntax \texttt{public void *p} will be treated as conventional C pointers.

Private pointers to functions are supported naturally in our implementation.
When a pointer stores a single location, the function is invoked as in
conventional program execution. If, however, a pointer acquires multiple
locations as a result of its modification inside conditional statements with
private conditions, at the time of pointer dereferencing all functions
stored in the pointer will be executed, but only the effects of one of them
will be applied. Conceptually this is the same as executing branching
statements with private conditions: all branches are executed, but only the
effects of one of them are applied depending on the result of private
condition evaluation. That is, when a pointer \texttt{p} storing $\alpha$
locations $L = (\ell_1, \ldots, \ell_\alpha)$ with the corresponding tags $T
= (t_1, \ldots, t_\alpha)$ is being dereferenced, each function $f_i$ stored
at address $\ell_i$ is being invoked. Then each (private) variable $a$ that
$f_i$ modifies is set to $a = a_i \cdot t_i + a_{orig} (1-t_i)$, where
$a_{orig}$ and $a_i$ are its original and newly computed by $f_i$ values. If
$a$ is modified by multiple $f_i$'s with indices ${i_1}$ through ${i_k}$,
its value is updated as $a = \sum_{j=1}^k a_{i_j} \cdot t_{i_j} +
a_{orig}(1- \sum_{j=1}^k t_{i_j})$ (recall that only one $t_i$ can be set to
1).

\section{Analysis}
\label{sec:analysis}

After discussing multiple aspects of private pointer design and its uses in
programming, in this section we summarize the notion of pointer to private
data and operations on it and formally show that program execution that
involves pointers to private data complies with a standard definition of
security used in secure multi-party computation. 

A pointer to private data is defined as a C-style pointer to a private
object storing location information of the object, where a private object
can be one of the following types:
\begin{enumerate}
  \item a primitive private data type (private int, float, etc.);
  \item a composite data type (C-style struct), each element of which is a
    private object;
  \item a function with no public side effects;
  \item a pointer to a private object.
\end{enumerate}
The second and forth categories define a pointer to private data
recursively, which means that a pointer can have any indirection level,
nested struct types, and any combination of primitive, composite, and
pointer data types. 
The previously defined operations on pointers to private objects are:
\begin{enumerate}
  \item Pointer read and update with value $v$, denoted as 
    \textbf{read}(\texttt{p}) and \textbf{update}(\texttt{p}, $v$),
    respectively. 
  \item Dereferenced pointer read and update, denoted as
    \textbf{read}(\texttt{*p}) and \textbf{update}(\texttt{*p}, $v$),
    respectively.
  \item Pointer update inside a conditional statement with a private
    condition. Following prior work, we use multiplexor notation to denote
    this operation as \textbf{mux}(\texttt{p}, $v_1$, $v_2$, $cond$), where
    \texttt{p} is set to $v_1$ if private $cond$ evaluates to 1 and to $v_2$
    otherwise. Pointer read inside a conditional statement with a private
    condition is processed identically to a conventional read
    \textbf{read}(\texttt{p}) and thus is not listed as a separate
    operation. 
  \item Dereferenced pointer update inside a conditional statement with a
    private condition, which we denote as \textbf{mux}(\texttt{*p}, $v_1$,
    $v_2$, $cond$). Similar to the previous case, processing of dereferenced
    pointer reads is not affected by the presence of conditional statements.
  \item Dynamic memory allocation in the form of malloc; for an assignment
    \texttt{p = pmalloc}($n$, $type$) we use notation
    \textbf{alloc}(\texttt{p}, $n$, $type$). 
  \item Dynamic memory deallocation as in \texttt{pfree(p)},
    denoted as \textbf{dealloc}(\texttt{p}).
  \item Array indexing \texttt{p}[$i$] with a public index $i$. This is
    treated as a generalization of pointer dereferencing, and we use notation
    \textbf{read}(\texttt{p}, $i$), \textbf{update}(\texttt{p}, $i$, $v$),
    \textbf{mux}(\texttt{p}, $i$, $v_1$, $v_2$, $cond$) to denote read,
    update, and update inside a conditional statement with a private
    condition, respectively.
  \item Array indexing \texttt{p}[$i$] with a private index $i$ is also
    divided into three operations \textbf{readp}(\texttt{p}, $i$),
    \textbf{updatep}(\texttt{p}, $i$, $v$), and \textbf{muxp}(\texttt{p},
    $i$, $v_1$, $v_2$, $cond$). These operations can only be performed on
    locations stored in a pointer that correspond to arrays with known
    bounds (i.e., allocated using the \texttt{pmalloc} interface or static
    array declaration). 
  \item Evaluation of predicate $f$ on one or more pointers \texttt{p}$_1,
    {\ldots}$, denoted as \textbf{pred}($f$, \texttt{p}$_1$, {\ldots}).
  \item Pointer casting, denoted as \textbf{cast}(\texttt{p},
    $type$), where $type$ is the data type to which \texttt{p} is cast.
\end{enumerate}
In the above, $v$, $v_1$, and $v_2$ correspond to either values associated
with private objects or data structures corresponding to pointers to private
objects, depending on the context. The value of $cond$ is always a private
bit, while variables $n$, $i$, and $type$ are public. Any variable can be
read inside a conditional statement with a private condition, but updates
can be performed only as specified using the \textbf{mux} operations.

This list implicitly defines operations that cannot be performed on pointers
to private objects and will be rejected by the compiler. That is, addresses
of public objects cannot be used to update a pointer to private data (either
via \textbf{update} or \textbf{mux}); a pointer to public data or a mix of
private and private fields defined with a struct construct cannot be
modified inside a conditional statement with a private condition (i.e.,
there is no corresponding \textbf{mux} operation); \texttt{pmalloc} and
\texttt{pfree} cannot be called inside conditional statements with private
conditions (i.e., there is no \textbf{mux} operations for them); similarly,
casting cannot be called inside conditional statements with private
conditions as it modifies publicly stored data.

Recall that our implementation of pointers maintains the invariant that in a
well-formed program there is exactly one true location associated with a
private object. The invariant may be violated only when memory associated
with a pointer is being deallocated when the pointer is still in use (in
such a case, no tag is set to 1 and dereferencing the pointer will return no
data).

In showing security of pointer-related operations that reference private
data, we use a traditional simulation-based definition of security. Because
PICCO is built on top of ($n$, $t$)-threshold secret sharing techniques with
$n \ge 3$ computational parties, we utilize the same setup in our security
analysis. Similarly, because the underlying techniques offer information
theoretic security, we utilize statistical (as opposed to computational)
indistinguishability in the security definition.  

\begin{definition}
Let parties $p_1, {\ldots}, p_n$ engage in a protocol $\Pi$ that evaluates
program $P$ on a mix of public and private data. Let
$\mathrm{VIEW}_\Pi(p_i)$ denote the view of $p_i$ during the execution of
$\Pi$, which is formed by its input, internal random coin tosses $r_i$, and
messages $m_1, {\ldots}, m_k$ passed between the parties during protocol
execution: $\mathrm{VIEW}_{\Pi}(p_i) = (in_i, r_i, m_1, {\ldots}, m_k).$ Let
$I$ denote a subset of the participants of size $t$ and
$\mathrm{VIEW}_\Pi(I)$ denote the combined view of the participants in $I$
during the execution of $\Pi$. Protocol $\Pi$ is said to be $t$-private in
the presence of semi-honest adversaries if for each coalition of size at
most $t < n/2$ there exists a probabilistic polynomial time simulator $S_I$
that given the input of the parties in $I$, $P$, and $P$'s output, produces
a view statistically indistinguishable from $\mathrm{VIEW}_\Pi(I)$ together
with the output of the parties in $I$.
\end{definition}

\begin{theorem} \label{thm:security}
Any program $P$ augmented with pointers to public and private data and no
out-of-boundary access compiled by PICCO is translated into a $t$-private
protocol for any $t < n/2$ when the computation is carried out by $n$
parties.
\end{theorem}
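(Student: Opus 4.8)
The plan is to use the standard composition theorem for secure multi-party computation based on secret sharing: since PICCO already produces $t$-private protocols for all operations on private data without pointers, it suffices to show that each pointer-related operation introduced in this work (the ten operations enumerated in the Analysis section) is itself realized as a sequence of secure building blocks whose composition is $t$-private, and that the transformed program reveals nothing beyond its public components. First I would observe that the security definition requires a simulator $S_I$ that, given the inputs and output of the corrupted coalition $I$ (with $|I| \le t < n/2$), produces a view statistically indistinguishable from $\mathrm{VIEW}_\Pi(I)$. Because Shamir secret sharing is $t$-private and any linear combination is computed locally (revealing no messages), the only operations that generate messages in a pointer's lifetime are the interactive multiplications and resharings used in computing tags and dereferenced values. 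The overall strategy is therefore modular: establish a simulator for each pointer operation in isolation, then invoke the sequential composition theorem for the full program.

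The key steps, in order, are as follows. I would first argue that all \emph{public} components of a pointer---the cardinality $\alpha$, the location list $L$, the data-type flag, and the indirection level---are, by construction, functions of public data only, so the simulator can compute them directly from $P$ and the public inputs; this is exactly the content of the ``Secrecy of pointers'' discussion, where it is shown that $\alpha$ and $L$ are publicly determined and that whether a predicate or conditional aborts is public knowledge. Second, for each operation I would exhibit that the private portion---the secret-shared tags in $T$ and the dereferenced values---is produced entirely by applications of the underlying PICCO primitives (share, local linear combination, interactive multiplication, inner product, and reconstruction), each of which already has a $t$-private simulator from the base compiler. Concretely, \textbf{read}(\texttt{*p}) is an inner product $\sum_i [a_i][t_i]$; \textbf{update}(\texttt{*p}, $v$) applies the formula $[a_i] = [t_i][a_{new}] + (1-[t_i])[a_i]$ per location; \textbf{mux} on pointers invokes Algorithm~\ref{alg:cond-as}, whose private lines 7, 9, 11 are again multiplications and linear combinations; \textbf{dealloc} uses the relocation formula $[a_i] = [a_1][t_i] + [a_i](1-[t_i])$ together with public pointer-list surgery; and \textbf{pred} reduces to an inner product or an integer comparison, both already secure. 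Since each of these is a composition of primitives with known simulators on secret-shared inputs, the simulator for the operation is obtained by concatenating the primitive simulators and filling in the publicly computable $\alpha$ and $L$.

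Third, I would invoke the sequential modular composition theorem for semi-honest security in the secret-sharing setting: a protocol that makes calls to $t$-private subprotocols, passing secret-shared values between them and never reconstructing an intermediate secret, is itself $t$-private, with the global simulator assembled from the per-call simulators. The hypothesis of no out-of-boundary access guarantees that the invariant (exactly one tag equal to 1) holds throughout, so no dangling-pointer corner case produces an all-zero tag vector that would deviate from the simulated distribution; it also ensures that private indexing \textbf{readp}/\textbf{updatep}/\textbf{muxp} always finds a valid block size, which is itself public information maintained by \texttt{malloc}/\texttt{free}. The main obstacle I anticipate is not any single simulator but the faithful accounting of what is public versus private at run time: in particular, verifying that the run-time abort decisions (for predicates with public side effects) and the block-size lookups for private indexing depend only on public data, so that the simulator can reproduce them exactly without access to the honest parties' secrets. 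Once this public/private boundary is pinned down---leaning on the already-established facts that $\alpha$, $L$, abort status, and allocated-block metadata are all public---the composition argument closes routinely, and the resulting $\Pi$ is $t$-private for every $t < n/2$.
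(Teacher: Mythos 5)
Your proposal is correct and follows essentially the same route as the paper's proof: it builds a simulator for each pointer operation by observing that the public pointer components ($\alpha$, $L$, type information, abort decisions) determine the interaction pattern, so each operation reduces to a publicly known number of invocations of the simulators for the underlying primitives (multiplication, inner product, private table lookup), and then applies the composition theorem (the paper cites Canetti) to conclude $t$-privacy of the whole program. The only cosmetic difference is that the paper walks through all sixteen operations explicitly, whereas you group them by the primitive they reduce to; the substance is identical.
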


\begin{proof}
Our proof proceeds by evaluating each operation involving a pointer to
a private object as summarized above. After building a simulator for
each pointer-related operation, we apply the composition theorem of
Canetti~\cite{can00} to the result, which would guarantee that any
combination of these operations (and other secure operations in PICCO)
results in security of the overall program $P$ that the computational
parties execute. Building our simulator requires only the use of $t$-private
implementations of addition/subtraction and multiplication operations (which
is met in PICCO by the underlying linear secret sharing scheme with $t <
n/2$).

We describe a simulator for each operation involving a pointer to a private
object in turn. Prior to each operation, each party in the adversarial
coalition $I$ holds a share of each relevant private data item (including
private input data, private fields of pointer data structures, etc.) and at
the time of operation termination each party in $I$ holds a share of the
output and/or updated private items. Many functions also modify data
publicly available to each party (including the simulator).
\begin{itemize}
  \item \textbf{read}(\texttt{p}): This operation simply retrieves the data
    structure stored in \texttt{p} and is local to each computational party.
    The simulator does not interact with the parties in $I$.
    
  \item \textbf{update}(\texttt{p}, $v$): The data structure contained in
    $v$ is simply copied into \texttt{p} by each party. The simulator does
    not interact with the parties in $I$.
    
  \item \textbf{read}(\texttt{*p}): When \texttt{p} stores only a single
    location, the value stored at that location is retrieved and the
    simulator does not interact with the parties in $I$. When
    \texttt{p} stores multiple locations, each party is instructed to
    iterate through the locations extracting values stored in them and
    then combine the values using the private tags associated with
    each location as described in Sections \ref{sec:pimpl}
    and~\ref{sec:struct}. Most of the procedure operates on public
    values (such as locations) and the only private computation
    consists of multiplying tags with private data (or other tags in
    case of pointers to pointers). The simulator thus participates in
    each multiplication operation on behalf of honest users by
    invoking a simulator corresponding to the multiplication operation
    the necessary number of times.
    
    When \texttt{p} is a pointer to an object of a complex data type
    declared using struct and a single field of \texttt{p} is being
    dereferenced (as in \texttt{p->x}), the way the simulator interacts with
    the parties in $I$ is not affected. (Only the locations from which
    values are retrieved are locally modified by a known offset by each
    party.)
    
  \item \textbf{update}(\texttt{*p}, $v$): Similar to reading a referenced
    pointer, when \texttt{p} is associated with only one location, value $v$
    is stored in that location without the parties interacting with the
    simulator. When \texttt{p} stores multiple locations, each party in $I$
    iterates through all possible locations and updates \emph{all locations}
    using secure multiplications as described in Sections \ref{sec:pimpl}
    and~\ref{sec:struct}. Because the locations (and their number) are
    always public, the simulator only needs to engage in a pre-determined
    number of secure multiplications simulating all honest users, which is
    does by invoking a simulator of the secure multiplication protocol.
    
    When \texttt{p} points to a struct and only one field \texttt{p->x} is
    to be updated, the simulator's interaction with the parties in $I$ is
    identical (but the valued are retrieved and the computed values are
    placed not at the locations stored in \texttt{p} but the locations
    adjusted by the offset of \texttt{x} in the struct).
    
  \item \textbf{mux}(\texttt{p}, $v_1$, $v_2$, $cond$): To implement
    conditional assignment, each party (and the simulator) first locally
    merges the lists stored in $v_1$ and $v_2$, after which the (private)
    tags needs to be updated based on the private condition $cond$ as
    described in Algorithm~\ref{alg:cond-as}. The simulator only needs to
    participate in a certain number of secure multiplications determined by
    the public contents of $v_1$ and $v_2$, which it performs as before
    by invoking a multiplication operation simulator.
    
  \item \textbf{mux}(\texttt{*p}, $v_1$, $v_2$, $cond$): Conditional update
    of a dereferenced pointer involves modifying the value stored at each
    location in \texttt{p} with a fixed function of $v_1$, $v_2$ and $cond$
    (see Section~\ref{sec:pimpl}). Thus, the simulator participates in a
    pre-determined number of secure multiplications simulating all honest
    users using a simulator for secure multiplication.
    
  \item \textbf{alloc}(\texttt{p}, $n$, $type$): Allocating memory for a
    number $n$ of private objects of type $type$ does not involve secure
    computation and thus the simulator does not interact with the parties in
    $I$ for the purpose of this operation. 
    
  \item \textbf{dealloc}(\texttt{p}): First, if one of the locations stored
    in \texttt{p} corresponds to the special ``uninitialized'' value, no
    party (including the simulator) performs any operation. Otherwise, the
    first location $\ell_1$ is removed from the data structure that
    \texttt{p} stores and the values at all other locations are updated
    using two secure multiplications (as described in
    Section~\ref{sec:dmem}). The simulator produces communication on behalf
    of all honest users to simulate invocations of the secure multiplication
    protocol as before. The parties consequently locate other pointers that
    store $\ell_1$ and update their locations and tags using a procedure
    that depends only on public data. The simulator is invoked to simulate
    a necessary number of secure multiplications on behalf of honest users
    using the simulator of secure multiplication.
    
  \item \textbf{read}(\texttt{p}, $i$): Similar to a dereferenced pointer
    read, the simulator will need to simulate 0 or more invocations of
    secure multiplications on behalf of honest users. The only difference
    from the simulation of \textbf{read}(\texttt{*p}) is that the pointer
    locations are (locally) adjusted by the space occupied by $i$ items by
    each party (including the simulator) during the computation.
    
  \item \textbf{update}(\texttt{p}, $i$): This procedure is also very
    similar to \textbf{update}(\texttt{*p}), where the difference is only in
    the local (publicly available) computation. 
    
  \item \textbf{mux}(\texttt{p}, $i$, $v_1$, $v_2$, $cond$): Similar to
    \textbf{mux}(\texttt{*p} $v_1$, $v_2$, $cond$), the simulator will need
    to participate in a pre-determined number of secure multiplications
    simulating all honest parties using its simulator after some computation
    on public data.
    
  \item \textbf{readp}(\texttt{p}, $i$): To retrieve an element of an array
    represented by a pointer using a private index, for each location stored
    in \texttt{p} that corresponds to a properly allocated memory block, the
    parties together with the simulator perform a private table lookup. This
    operation is implemented in PICCO by reading each element of the array
    and can be easily simulated once the array size is determined using
    information publicly stored by each computational party. Thus, for each
    eligible address stored in \texttt{p} the simulator simulates private
    table lookup operation on behalf of honest users by invoking a private
    table lookup simulator, after which the results from multiple locations
    (if present) are combined together using private tags, during which the
    simulator engages in secure multiplication simulation.
    
  \item \textbf{updatep}(\texttt{p}, $i$, $v$): This operation proceeds
    similar to \textbf{readp}(\texttt{p}, $i$), but with some operations
    performed in a different order. Each party in $I$ and the simulator
    first locally determine eligible addresses in \texttt{p} using public
    information, after which the parties and the simulator need to modify
    the data being stored for each location using the location's tag through
    secure multiplication and then engages in the interaction for table
    update at a private location. Thus, the simulator needs to engage in a
    pre-determined number of secure multiplication simulations and then in
    simulating interaction corresponding to private table updates. As
    before, the simulator can simply invoke the corresponding simulators for
    secure multiplication and table updates at a private location.
    
  \item \textbf{muxp}(\texttt{p}, $i$, $v_1$, $v_2$, $cond$): Conditional
    update of an array element at a private index is performed similar to
    the regular update. Here, the parties and the simulator can first
    combine data of $v_1$ and $v_2$ using $cond$ into a single $v$, where
    the simulator will need to engage in simulating a fixed number of secure
    multiplications. After this they can follow the steps of
    \textbf{updatep}(\texttt{p}, $i$, $v$) with the simulator producing
    messages as described for the simulation of that operation.
    
  \item \textbf{cast}(\texttt{p}, $type$): This operation only updates
    public information associated with pointer \texttt{p} and the simulator
    does not interact with the parties in $I$.

  \item \textbf{pred}($f$, \texttt{p}$_1$, {\ldots}): The simulator proceeds
    by evaluating the operations in $f$ in order. For any operation that
    uses a single pointer, the simulation proceeds according to one of the
    above cases. For any operation that uses two pointers (most
    significantly, pointer equality or inequality testing), the simulator
    first computes the public/private status of the result. If the status is
    public, the result is determined locally. Otherwise, the simulator
    engages in the computation associated with determining the (private)
    result by invoking the simulators corresponding to the operations used
    in the computation. Once any intermediate result in evaluating $f$ is
    computed to be private, any computation that uses the result is carried
    out on private data by invoking simulators corresponding to the
    operations used in $f$.
\end{itemize}
This covers all possible operations with pointers to private objects and
concludes the proof. 
\end{proof}

We note that our security result above is stated for programs that PICCO
successfully compiles and that do not contain out-of-boundary array accesses.
As previously discussed, the compiler will reject almost all 
types of improperly written programs during static analysis at
compilation time. For example, if a pointer to public data is modified
inside a conditional statement with a private condition, the program
will not compile. This means that the compiler will filter out the
majority of programs with errors, while any well-formed program enjoys
simulatable security.

The main type of errors that the compiler may not be able to detect during
static analysis deal with memory access to invalid locations (e.g., out of
boundary array access, access to a hard-coded invalid location, etc.). When
such accesses are triggered, one or more computational parties might not be
able to continue with the execution and quit on error, but no privacy
violations take place. This is because while incorrect shares of private
data might be read, the data will still remain in the protected form and
cannot be reconstructed without the programmer's original intent. Note that
this discussion applies only to dereferencing a pointer with no offset
(e.g., as in \texttt{*p}) or with a publicly known offset (e.g., as in
\texttt{p}[$i$] with public $i$) because in our implementation accessing
array at a private location never results in out of boundary access (i.e.,
calling this operation on an improperly allocated array will be skipped at
 runtime).

Because the memory layout may differ on different platforms, illegal memory
accesses might result in different behavior of the computational parties
(e.g., execution at one party might result in a segmentation fault faster
than at another party). As our simulator is not guaranteed to run in an
identical environment to those of the honest parties, the simulation might
be distinguishable for programs with illegal memory accesses based on the
point of execution when any given party aborts the computation. Thus, we
exclude such programs from the security statement of
Theorem~\ref{thm:security}, but still guarantee that any program that
compiles in our framework will not result in privacy violations.

To summarize, our solution guarantees that any program that can be compiled
in our framework never reveals any unauthorized information about private
data. Simulation of a poorly written program with illegal memory accesses
may not be indistinguishable from its real execution, but any properly
written program enjoy simulation-based security. 

\section{Pointer-Based Data Structures}
\label{sec:data-structures}

There are several popular data structures typically built using
pointers. In this section we discuss how they would be implemented
using pointers to private data and in what complexities their
performance results.  In particular, we explore linked lists, trees,
stacks, and queues.

\subsection{Linked Lists}
\label{sec:ll}

A linked list consists of a sequentially linked group of nodes. For a singly
linked list, each node is composed of data and a reference in the form of a
pointer to the next node in the sequence, while for more complex variants
such as doubly and circular linked lists the reference field incorporates
additional links. A linked list allows for efficient node insertion and
removal, which makes it an ideal candidate for implementation of stacks and
queues as well as representation of graphs that uses an adjacency list. In
what follows, we discuss implementations of linked lists that store private
data. We start by analyzing various operations in standard linked lists and
then elaborate on the special case when a linked list stores sorted data.
The latter does not represent a typical use of linked lists in programming
(and does not necessarily have attractive features), but is provided as a
relatively simple way to demonstrate what form working with sorted data can
take in a secure computation framework.

\medskip \noindent {\bf Standard linked lists}. Because of ubiquitous use of
linked lists in programming, we analyze different possible uses of linked
lists and the corresponding operations. When a linked list stores public
data, node insertion has cost $O(1)$ as a node is inserted in a fixed place
(e.g., beginning or end of the list). Performing a search requires $O(n)$
time, where $n$ is the number of nodes in the list, because the nodes are
traversed sequentially. Deleting a node from a fixed place (i.e., beginning
or end of the list as done in the case of stacks and queues) involves $O(1)$
time, but when deletion is preceded by a search (and the found node is
deleted), the search together with deletion require $O(n)$ time.

When a linked list stores private data, the reference field holds a pointer
to private data (i.e., a record of the same type) and at the time of node
creation, the pointer stores a single location. Without loss of generality,
let nodes be inserted in the beginning of the list. Node insertion then
places a new node in the beginning of the list manipulating pointers as
before, which still takes $O(1)$ time and is very efficient. Searching a
list involves $n$ private comparisons and all nodes need to be processed as
not to reveal the result of individual comparisons on private data and the
total work is $O(n)$. Similarly, when a node is deleted from the beginning
(or end) of the list, the time complexity of the operation is $O(1)$ and
each node's pointer still stores a single location. It is only when nodes
need to be removed from varying positions in the list and the position
itself needs to be protected, pointers can start acquiring multiple
locations, which causes the time complexity of list traversal and deletion
after a search to go up. However, when the fact whether the searched data
was found in the list or not must remain private, we cannot remove any node,
but instead need to erase the content of a found node (if present) with a
value that indicates ``no data''. In this case, all pointers still contain a
single location and the cost of list search and other operations do not
change, but the list will never reduce in its size. We defer the discussion
of the case when the node is guaranteed to be found in a search and needs to
be removed from a private location until the end of this subsection. 

Throughout this section, we express complexities of the operations as a
function of $n$, where $n$ is the ``visible'' list size. This value will
correspond to the actual list size when delete operations remove an element
from the list so that its size is reduced, while it will correspond to the
number of insertions (i.e., the maximum list size) when delete operations
only mark data as deleted (to hide whether the element was found or not),
but not reduce the list size. The same notation applied to other data
structures discussed in this section when insertions/deletions are based on
the result of private conditions.

\medskip \noindent {\bf Sorted linked lists}. As mentioned before, we
discuss sorted linked lists only as a means of demonstrating how sorted data
might be processed using a general-purpose secure computation compiler and
it should be understood that this is not a typical use of linked lists or
even not the best way of working with sorted data. We use the results of
this discussion in our consecutive description.

Now when a node is being inserted in a linked list, the insertion position
must be determined based on the data stored in the list, which involves
$O(n)$ time with public data (and the complexities of other operations are
the same as before). When we work with private data, the location where the
node is being inserted must remain private (since it depends on private
data) and the execution needs to simulate node insertion at every possible
position. Consider the following two ways of inserting a node and the
performance in which they result:

\begin{enumerate}
\item \emph{Pointer updates:} The first is a traditional implementation of
  node insertion in a linked list, where if the correct insertion point is
  found, we update the pointer of the found node in the list to point to the
  new node and the pointer of the new node to point to the next node in the
  list. Because this conditional statement is based on private data, this
  will result in adding one location to the pointer in the found node and
  one location to the pointer in the node being inserted. After executing
  this operation for every node of the list, the pointer of each node in the
  original list stores 2 locations and the pointer in the newly inserted
  node stores $n$ locations. When this operation is performed repeatedly,
  each node in the list acquires more and more locations (to the maximum of
  the current list size). This means that if the list is built by inserting
  one node at a time, the cost of node insertion and list traversal becomes
  $O(n^2)$. Node deletion after a search also takes $O(n^2)$ time, while
  node deletion from a fixed location is bounded by $O(n)$. When, however,
  only a constant number of nodes are inserted into an existing list (which,
  e.g., can be provided as sorted input into the program), the complexity of
  all operations are unchanged from the public data case.

\item \emph{Data updates:} Another possible implementation of sorted linked
  lists is to always insert a new node at the beginning of the list and keep
  swapping its content with the next node on the list until the correct
  insertion point is found. When this algorithm is implemented obliviously
  on private data using an SMC compiler, the computation processes each node
  on the list starting with the newly inserted node and based on the result
  of private comparison of current and new data either performs the swap or
  keeps the data unchanged. After each node insertion the reference field
  of each node still points to a single node in the list and therefore
  the complexity of all operations are unchanged from their public versions.
\end{enumerate}
Thus, it is clear that we want to avoid acquiring a large number of
locations in each reference field of a pointer-based data structure and
privately moving data (as opposed to privately moving pointers) is preferred
when working with sorted data.

\medskip \noindent \textbf{Node deletion in standard linked lists.}
We can now return to the question of deleting a node from a private location
in a standard (unsorted) linked list when it is known that the searched node
is present and needs to be removed from the list. The above two approaches
of inserting a node in a private position also apply to deleting a node from
a private position. The first, standard, approach of manipulating pointers
will result in acquiring multiple locations at each pointer, which degrades
performance of all operations. Using the second approach of data updates, we
can obliviously place the data to be deleted into the first node on the list
(after scanning the nodes and swapping values based on private data
comparisons) and then simply remove it from the list. This will maintain
optimal complexities of all operations. The above tells us that traditional
implementations of data structures can exhibit performance substantially
worse than alternative implementations in a secure computation framework and
our analysis can be viewed as a step in making informed decisions about
implementation needs. 

\subsection{Trees}

Trees implement hierarchical data structures commonly used to store sorted
data and make searching it easy. A tree node is typically comprised of data
and a list of references to its child nodes. In an $n$-node balanced search
tree, all of searching, node insertion, and node deletion take $O(\log n)$
time. Unfortunately, these complexities greatly change when we write a
program to implement a search tree on private data. In what follows, we
distinguish between trees that are pre-built using the information
available prior to the start of the computation and trees built gradually
using information that becomes available as the computation proceeds: 
\begin{enumerate}
\item \emph{Pre-built trees.} Consider a balanced binary search tree and
  suppose that we want to perform a search on the tree. A traditional
  implementation involves $O(\log n)$ conditional statements to traverse the
  tree from the root to a leaf choosing either the left or right child of
  the current node. When the data is private, such statements use private
  conditions and thus both branches of the computation must be executed. The
  result is that the sequence of $O(\log n)$ nested private conditions
  results in executing all possible $O(n)$ branches of the computation and
  touches all nodes in the tree. This is an exponential increase in the
  complexity compared to working with public data, even if we do not consider
  node insertions and deletions that result in node rotations to balance the
  tree (which are discussed next together with gradually-built trees). 

\item \emph{\it Gradually-built trees.} By analogy with inserting nodes into
  a sorted linked list, we can either manipulate pointers to insert a new
  node at the appropriate place in the tree or insert the node in a fixed
  location and move the data in place. The complexity of the latter option
  is $O(n)$ for insertions, deletions, and search and we take a closer look
  at the former. As we traverse the tree looking for the place to insert the
  new node, similar to searching, all nodes will be touched (as a result of
  nested private conditions). Furthermore, because the execution cannot
  reveal the place into which the new node is inserted, pointers in all
  nodes will acquire new locations. If we add computation associated with
  node rotations when the tree becomes unbalanced, pointers will be
  acquiring new locations even faster (to the maximum of $n-1$ per pointer).
  After repeatedly calling insert to gradually build the tree, eventually
  each node will point to all other nodes resulting in $O(n^2)$ complexity
  for insertions, deletions, and searching. Such complexity is clearly
  avoidable and alternative implementations should be pursued.
\end{enumerate}
Search trees represent the worst possible scenario where implementing an
algorithm on private data using a general-purpose compiler incurs an
exponential increase in its runtime compared to the public data counterpart.
As is evident from our discussion of linked lists and trees, searching an
$n$-element store for a single element cannot be performed in less than
linear time using generic techniques, regardless of whether the data is
stored sorted or not. It means that without custom, internally built
implementations of specific data structures it is conceptually simpler and
more efficient to maintain data in unsorted form, use append for insertion
($O(1)$ time), and shift data to implement deletion.

\subsection{Stacks}

A stack is characterized by the \textit{last-in, first-out} (LIFO) behavior,
which is achieved using push and pop operations. It has several fundamental
applications such as parsing expressions (e.g., parsing programs in
compilers), backtracking, and implementing function calls within an
executable program. To the best of our knowledge, despite its popularity,
this data structure has not been studied in the context of secure
multi-party computation before and our analysis and consecutive
implementation of stack that works with private data demonstrate its appeal
for secure computation.

A pointer-based implementation of a stack is built using a linked list,
where a node is always inserted at the head of the list and is always
removed from the head as well, either of which takes $O(1)$ time. As was
discussed in section~\ref{sec:ll}, implementing these operations on private
data maintains constant time complexities.

When using a stack with private data, we also consider the possibility that
push and pop operations might be performed inside conditional statements
with private conditions, in which case it is not publicly known whether the
operation takes place and what record might be on top of the stack. Then if
we implement a conditional private push operation by manipulating pointers,
the top of the stack will store $m+1$ locations when the last $m$ push
operations were based on private conditions. Implementing a push operation
is then equivalent to executing the code:

{\small \begin{verbatim}
1. node p = new node();
2. if (priv-cond)
3.    p->next = top;
4.    top = p;
\end{verbatim}} 
\noindent Because both \texttt{p} and \texttt{p->next} store
only a single location at the time of conditional push, merging the lists of
\texttt{p->next} and \texttt{top} takes $O(m)$ time. Similarly, merging the
lists of \texttt{top} and \texttt{p} takes $O(m)$ time.

Implementing a pop operation within a private condition involves executing
code:

{\small \begin{verbatim}
1. if (priv-cond)
2.    temp = top;
3.    top = top->next;
4.    // use temp
\end{verbatim}} 
\noindent The complexity of this operation is dominated by the second
assignment. Because \texttt{top} points to $O(m)$ locations, and the
\texttt{next} field of each of its locations can store $O(m)$ locations as
well, the overall complexity of that assignment is $O(m^2)$. This means that
the worst time complexity of a conditional push becomes $O(n)$ for a stack
containing $n$ records and it is $O(n^2)$ for a conditional pop.

If we instead implement push and pop operations that depend on private
conditions by maintaining a single chain of records (with pointers
containing a single location) and data update, push and pop operations
result in $O(1)$ and $O(n)$ work, respectively. That is, we can always
insert a new node (with data or no data depending on the private condition)
into the stack and take $O(n)$ time during pop to privately locate the first
node with data (and erase the data as necessary).

\subsection{Queues}

Queue is another important data structure used to maintain a set of entities
or events in a specified order which are waiting to be served. We can
distinguish between \textit{first-in, first-out} (FIFO), \textit{last-in,
first-out} (LIFO), and priority queues. Implementing a queue involves
maintaining two pointers: the head and the tail. The head points to the
beginning of the queue, i.e., the element that will be removed by a dequeue
operation, and the tail points to the last element added to the queue using
an enqueue operation. 

Similar to the stack, when enqueue and dequeue operations in a FIFO queue
are implemented on public data or private data outside of private
conditional statements, their complexities are $O(1)$. Their complexities
for enqueue and dequeue operations are also $O(n)$ and $O(n^2)$,
respectively, when implemented through private pointer manipulation (the
implementation needs to maintain two pointers for the head and tail of the
queue, but updating the second pointer does not asymptotically increase the
amount of work) and $O(1)$ and $O(n)$, respectively, when private data
update is used. 

In a priority queue, each node additionally stores priority (which we assume
is private) and dequeue removes a node with the highest priority. The
complexity of priority queue operations depends on the underlying data
structure used to implement it. The best known complexities for public data
are $O(\log n)$ for enqueue ($O(1)$ average case) and $O(\log n)$ for
dequeue using a heap.

Suppose for now that all operations are outside conditional statements with
private conditions. If we use a linked list to store queue nodes, the best
performance can be achieved using $O(1)$ for enqueue and $O(n)$ for dequeue
(i.e., always store a newly inserted node in the beginning and remove the
highest priority node from a private location as a result of the search for
the highest priority element) or $O(n)$ for enqueue and $O(1)$ for dequeue
(i.e., store the list sorted and always remove the first node during
dequeue). Then if the operations depend on private conditions, we can
maintain $O(1)$ for enqueue and $O(n)$ for dequeue if the operations depend
on private conditions using a very similar approach to that of regular
queues and stacks. That is, we always insert an element into the beginning
of the queue as a result of a conditional enqueue (if the condition is
false, the element is empty), and during dequeue we scan the queue for the
highest priority element and erase it from the queue if the condition is
true.

If the underlying implementation is a heap, we insert a new node in a fixed
leaf location and use $O(\log n)$ compare-and-exchange operations to
maintain the invariant of a max-heap to implement enqueue. Realizing
dequeue, however, requires $O(n)$ work because it cannot be revealed what
path was traversed from the root to a leaf (since the path choice depends on
private priorities). Similar to other implementations, we can maintain these
complexities even when enqueue and dequeue are performed as a result of
private condition evaluation.

\subsection{Summary}

Before we conclude this section, we would like to summarize performance of
different data structures that can be implemented on private data using
newly introduced pointers to private data or records.
Table~\ref{tab:data-structures} lists the best performance we could achieve
using a pointer-based implementation of the data structures discussed in
this section. Recall that in all data structures with conditional operations
performance depends on the number of insertions as opposed to the actual
data size.
\begin{table} \small \centering 
\begin{tabular}{|l|c|c|c|} \hline
\hfil Data structure & Insert & Delete & Search \\ \hline
Linked list & $O(1)$ & $O(1)$ & $O(n)$ \\ \hline
Linked list (delete at private location) & $O(1)$ & $O(n)$ & $O(n)$ \\ \hline
Search tree & $O(n)$ & $O(n)$ & $O(n)$ \\ \hline
Stack or queue & $O(1)$ & $O(1)$ & --- \\ \hline
Stack (conditional private push \& pop) or & \multirow{2}{*}{$O(1)$} &
\multirow{2}{*}{$O(n)$} & \multirow{2}{*}{---} \\ 
queue (conditional private enqueue \& dequeue) & & & \\ \hline
\multirow{2}{*}{Priority queue} & $O(1)$ & $O(n)$ & --- \\
\cline{2-4}
& $O(n)$ & $O(1)$ & --- \\ \hline
Priority queue (conditional private enqueue \& dequeue) & $O(1)$ & $O(n)$ & --- \\ \hline
\end{tabular}
\caption{Performance of various data structures using pointers to
  private data.} \label{tab:data-structures}
\end{table}
We note that the complexities in Table~\ref{tab:data-structures} can be
directly linked to the amount of memory consumed by those data structures,
with small fixed constants hidden behind the asymptotic notation.

These data structures can also be evaluated using alternative mechanisms.
For example, our analysis suggests that implementing these data structures
using arrays of private data instead of pointers to private data would
result in the same complexities (which is often the case for public data as
well). Also, utilizing ORAM-based implementation can improve asymptotic
complexity of some (but not all) data structures and can lead to faster
runtime in practice at least for large enough data sets. The most pronounced
benefit of using ORAM will be observed for implementing search trees, where
all operations can be performed in polylogarithmic (in $n$) time (e.g.,
using the solution in \cite{Wang14}). On the other hand, using ORAM for
linked lists can only increase the complexity of its operations (even the
complexity of a delete at a private location following a search cannot be
reduced below $O(n)$). Other data structures that can benefit from
ORAM-based implementations are stacks and queues where the operations that
update the data structures are performed inside private conditional
statements. ORAM techniques, however, involve larger constants behind the
big-O notation than simple operations and their initial setup cost is also
significant. We thus leave a thorough comparison of ORAM vs. pointer or
array based implementations of various data structures in this framework as
a direction of future work.

\section{Performance Evaluation}
\label{sec:perf}

In this section, we report on the results of our implementation and
evaluation of a number of representative programs that utilize pointers to
private data. Because such programs have not been previously evaluated in
the context of secure multi-party computation, we cannot draw comparisons
with prior work. In some cases, however, we are able to measure the cost of
using pointers, or the cost of a pointer-based data structure, in a program by
implementing the same or reduced functionality that makes no use of pointers. 
Note that because PICCO can be used for both secure multi-party computation
and outsourcing, the inputs in these programs can come from one or more
input parties/clients.

The programs that we implemented and evaluated as part of this work are: 
\begin{enumerate}
\item The first program constructs a linked list from private data read
  from the input and then traverses the list to count the number of times a
  particular data value appears in the list. This is a traditional
  implementation of a linked list, where each record with private data is
  prepended to the beginning of the list when building it. The program is
  given in Figure~\ref{fig:ul}. 
  
  We next notice that this program is sub-optimal in terms of its run time
  because it does not utilize concurrent execution capabilities provided in
  PICCO. For that reason, we also implement an optimized version of this
  program. The difference is that all private comparisons during the list
  traversal are executed in a single round using PICCO's batch constructs.

\begin{figure}[t!]\small
\begin{verbatim}
struct node {
   private int data;
   struct node *next;
};
public int count = 128;

public int main() {
   public int i;
   private int array[count], output;
   struct node *ptr, *head = 0;

   smcinput(array, 1, count);
   //construct the list 
   for (i = 0; i < count; i++) {
      ptr = pmalloc(1, struct node);
      ptr->data = array[i];
      ptr->next = head;
      head = ptr;
   }
   //traverse the list
   privaate int val = 10;
   ptr = head;
   for (i = 0; i < count; i++) {
      if (ptr->data == val)
         output = output+1;
      ptr = ptr->next;
   }
   smcoutput(output, 1); 
   return 0;
}\end{verbatim}
\caption{Construction and traversal of a linked list.} \label{fig:ul}
\end{figure}

\item To evaluate pointer-based implementations that work with private data
  maintained in a sorted form, and more generally privately manipulating
  pointer locations vs. obliviously moving data, we build a program for a
  sorted linked list. The functionality of this program is similar to that
  of the first program (i.e., create a linked list and then traverse it to
  count the number of occurrences of a given data item in it) and the
  difference is in the way the list is build. We evaluate two variants of
  the program corresponding to pointer update (PU) and data update (DU) as
  described in section~\ref{sec:ll}. The program for the DU variant is given
  in Figure~\ref{fig:sl-vu}, and the program for the PU variant is given in
  Figure~\ref{fig:sl-di}. 
  
\begin{figure}[t!]\small
\begin{verbatim}
struct node {
   int data;
   struct node *next;
};
public int count = 128;

public int main() {
   public int i, j;
   private int array[count], output, tmp;
   struct node *head, *ptr1, *ptr2;
   smcinput(array, 1, count);

   //construct the list
   head = pmalloc(1, struct node);
   head->data = array[0];
    
   for (i = 1; i < count; i++) {
      ptr1 = pmalloc(1, struct node);
      ptr1->data = array[i];
      ptr1->next = head;
      head = ptr1;

      ptr2 = head;
      for (j = 0; j < i; j++) {
         if (ptr2->data > ptr2->next->data) {
            tmp= ptr2->data;
            ptr2->data = ptr2->next->data;
            ptr2->next->data = tmp;
         }
         ptr2 = ptr2->next;
      }
   }
   //traverse the list
   private int val = 10;
   ptr1 = head;
   for (i = 0; i < count; i++) {
      if (ptr1->data == val)
         output = output+1;
      ptr1 = ptr1->next;
   }
   smcoutput(output, 1); 
   return 0; 
}\end{verbatim}
\caption{Construction and traversal of a sorted linked list (using data update).} \label{fig:sl-vu}
\end{figure}

\begin{figure}[h!]\small
\begin{verbatim}
//global declarations are the same as in Fig. 2

public int main() {
   public int i, j;
   private int array[count], output;
   struct node *head, *ptr1, *ptr2;
   smcinput(array, 1, count); 

   //construct the list
   ptr1 = pmalloc(1, struct node);
   ptr2 = pmalloc(1, struct node);
   ptr1->data = array[0];
   ptr2->data = array[1];
   if (array[0] < array[1]) {
      head = ptr1;
      head->next = ptr2;
   } else {
      head = ptr2;
      head->next = ptr1;
   }
   for (i = 2; i < count; i++) {
      ptr1 = pmalloc(1, struct node);
      ptr1->data = array[i];
      ptr1->next = 0;
      ptr2 = head; 
      if (ptr1->data < ptr2->data){
         ptr1->next = ptr2;
         head = ptr1; 
      }      
      for (j = 0; j < i; j++) {
         if ((ptr2->data < array[i]) && 
             (ptr2->next->data > array[i])) {
             ptr1->next = ptr2->next;
             ptr2->next = ptr1;
         }
         ptr2 = ptr2->next;
      }
      if (ptr2->data < ptr1->data)
         ptr2->next = ptr1;
   }
   //traversal code is the same as in Fig. 2

   // remove the head node
   val = head->data;
   ptr1 = head;
   head = head->next;
   pfree(ptr1);
   smcoutput(val, 1);
   return 0; 
}\end{verbatim}
\caption{Construction and traversal of a sorted linked list (using pointer
  update).} \label{fig:sl-di}
\end{figure}

\item The third program implements mergesort that takes an array of
  unsorted integers as its input. The program makes an extensive use of
  pointers to private data to pass data by reference to a function that
  conditionally swaps two data items based on their values (i.e., performs
  the so-called compare-and-exchange operations). Mergesort was chosen not 
  necessarily because it provides the best performance for an oblivious
  sort, the objective instead was to demonstrate how performance of a
  program that utilizes pointers to private data (and exercises modular
  design of a program) compares to a similar program that does not use
  pointers. We thus also evaluate another version of mergesort that performs
  compare-and-exchange operations in place (without calling any function)
  and makes no use of pointers. The pointer-based mergesort is given in
  Figure~\ref{fig:ms-p} and its non-pointer-based implementation is given in
  Figure ~\ref{fig:ms}. 
  
\begin{figure}[t!]\small
\begin{verbatim}
public int K = 128;

public void swap(private int* A, private int* B) {
   private int tmp;
   if (*A > *B) {
       tmp = *A;
       *A = *B;
       *B = tmp;
   }
}

void mergesort(private int *A, public int l, public int r) {
   public int i, j, k, m, size;
   size = r - l + 1;
   
   if (r > l) {
      m = (r + l)/2;
      [ mergesort(A, l, m); ]
      [ mergesort(A, m + 1, r); ]
      
      for (i = size >> 1; i > 0; i = i >> 1) 
         for (j = 0; j < size; j += 2*i) [
            for (k = j; k < j + i; k++) [
             swap(&A[k+l], &A[k+i+l]); 
         ]
      ]
   }
}

public int main() {
   public int median = K/2;
   private int A[K];
   smcinput(A, 1, K);
   mergesort(A, 0, K-1);
   smcoutput(A[median], 1);
   return 0;
}
\end{verbatim}
\caption{Mergesort median program with pointers.} \label{fig:ms-p}
\end{figure}
\begin{figure}[t!]\small
\begin{verbatim}
public int K = 128;
private int A[K];

void mergesort(public int l, public int r) {
   public int i, j, k, m, size;
   size = r - l + 1;
   int tmp[size];
   
   if (r > l) {
      m = (r + l)/2;
      [ mergesort(l, m); ]
      [ mergesort(m + 1, r); ]
      
      for (i = size >> 1; i > 0; i = i >> 1) 
        for (j = 0; j < size; j += 2*i) [
          for (k = j; k < j + i; k++) [
            tmp[k] = A[k+l];
            if (A[k+l] > A[k+i+l]) {
               A[k+l] = A[k+i+l];
               A[k+i+l] = tmp[k];
            }
         ]
       ]
     }
  }
  
public int main() {
   public int median = K/2;
   smcinput(A, 1, K);
   mergesort(0, K-1);
   smcoutput(A[median], 1);
   return 0;
}\end{verbatim}
\caption{Mergesort median program without pointers.} \label{fig:ms}
\end{figure}

\item Our last program implements a shift-reduce parser for a context-free
  grammar (CFG) on private data. This is one of fundamental applications
  that can now be naturally implemented using the compiler by building and
  maintaining a stack, once support for pointers to private data is in place.
  We choose a CFG that corresponds to algebraic expressions consisting of
  additions, multiplications, and parentheses on private integer variables,
  which is specified as follows:
  
  \smallskip
 $\textit{statement} \rightarrow \textit{statement}\ |\ \textit{statement}\ 
 \textbf{+}\ \textit{term}$\\
 $\textit{term} \rightarrow \textit{term}\ |\ \textit{term}\ \textbf{*}\ 
 \textit{factor}$\\
 $\textit{factor} \rightarrow \textbf{var}\ |\
 \textbf{(}\textit{statement}\textbf{)}$

  \smallskip Here, all variables are shown in italics, while terminals are
  set in bold font. The grammar can obviously be generalized to more complex
  expressions and programs that work with private as well as public
  variables of different types. We view this application as enabling one to
  evaluate a custom function on private data without writing and compiling a
  separate program for each function. That is, both the function to be
  evaluated and its input (consisting of private data) are provided as input
  to the parser. We note that it is possible for the function or the grammar
  rules to be private as well, but this would result in an increase in the
  program performance. Our parser uses one lookahead character, and due to
  the complexity of the implementation, the program itself is not included
  in the paper.  
  
  To approximate performance overhead associated with using a pointer-based
  stack, we create a program that performs only arithmetic operations on
  private data which are given to the parser and which the parser executes.
  Note that unlike evaluation of  mergesort, these are not equivalent
  functionalities. That is, one program is much more complex, parses its
  input according to the CF grammar, maintains a stack, etc., while the
  other only performs additions and multiplications.
\end{enumerate}
Note that most of these programs already exercise dynamic memory allocation
(i.e., all linked list programs and the shift-reduce parser). However, to
provide a more complete evaluation of dynamic memory management, we also
include experiments that measure the overhead of dynamic memory
deallocation. Thus, we incorporate calls to \texttt{pfree} to two programs:
(i) we call \texttt{pfree} as part of the shift-reduce parser at the end of
each pop operation and (ii) we evaluate the cost of removing the head node
in a sorted linked list built using pointer update (Figure~\ref{fig:sl-di}).
These were chosen as natural applications of memory deallocation, where
pointers to private objects contain a single and multiple locations,
respectively. In the second case, the head stores locations of all nodes on
the list and the overhead of \texttt{pfree} includes updating the structures
of other pointers on the list upon memory deallocation.

Each program was compiled using PICCO, extended with pointer support as
described in this work, and run in a distributed setting with three
computational parties. All compiled programs utilize the GMP library for
large number arithmetic and OpenSSL to implement secure channels between 
each pair of computational parties. We ran all of our experiments using
three 2.4 GHz 6-core machines running Red Hat Linux and connected through
1Gb/s Ethernet. 
\begin{table*} \centering \small %\setlength{\tabcolsep}{0.3ex}
\resizebox{\columnwidth}{!}{
\begin{tabular}{|l|c|c|c|c|c|c|c|} \hline

\hfil  \multirow{2}{*}{Program} & Field & \multicolumn{6}{|c|}{Data size}  \\ \cline{3-8}

& size (bits) & $2^5$ & $2^8$ & $2^{11}$ & $2^{14}$ & $2^{17}$ & $2^{20}$ \\ \hline \hline

Linked list & \multirow{3}{*}{81} & 0.0004 $\pm$ $5\%$ &  0.003 $\pm$ $4\%$
&  0.014 $\pm$ $3\%$ & 0.097 $\pm$ $2\%$  & 0.760 $\pm$ $1\%$ &  5.40 $\pm$ $1\%$
 \\  \cline{3-8}
(list building, traversal, & & 0.086 $\pm$ $1\%$ &0.661 $\pm$ $1\%$ & 5.30
$\pm$ $3\%$ & 42.27 $\pm$ $1\%$  & 337.8 $\pm$ $1\%$ & 2,692 $\pm$ $2\%$ \\
\cline{3-8}  

and optimized traversal) & & 0.026 $\pm$ $7\%$ &0.140 $\pm$ $3\%$ & 1.019 $\pm$ $1\%$ & 8.051 $\pm$ $2\%$ & 63.75 $\pm$ $1\%$ &513.8 $\pm$ $1\%$ \\ \hline \hline

Shift-reduce parser &  33 & 0.005 $\pm$ $9\%$ &  0.039 $\pm$ $2\%$  &  0.307 $\pm$ $1\%$  &  2.439 $\pm$ $1\%$ & 19.73 $\pm$ $2\%$ &  157.2 $\pm$ $1\%$
 \\ \hline

Arithmetic operations & 33 & 0.005 $\pm$ $9\%$ & 0.038 $\pm$ $3\%$  &  0.294 $\pm$ $1\%$  &  2.336 $\pm$ $1\%$ & 18.85 $\pm$ $1\%$ & 150.8 $\pm$ $1\%$
 \\ \hline
\end{tabular}
}
\caption{Performance of representative programs with unsorted data structures measured in seconds.} \label{tab:exp1}
\end{table*}

\begin{table*} \centering \small \setlength{\tabcolsep}{1.3ex}
\resizebox{\columnwidth}{!}{
\begin{tabular}{|l|c|c|c|c|c|c|c|} \hline

\hfil  \multirow{2}{*}{Program} & Field &\multicolumn{6}{|c|}{Data size}  \\ \cline{3-8}
& size (bits) & $2^4$ & $2^5$ & $2^6$ & $2^7$ & $2^8$ & $2^9$ \\ \hline \hline

{Sorted linked list (DU)} & \multirow{2}{*}{81} & 0.466 $\pm$ $1\%$ &  1.908
$\pm$ $1\%$  & 7.750 $\pm$ $1\%$  &  31.24 $\pm$ $1\%$  &  125.5 $\pm$ $1\%$  &  565.5 $\pm$ $1\%$ 
 \\  \cline{3-8}
 (list building and traversal) & & 0.036 $\pm$ $1\%$ &0.071 $\pm$ $1\%$ & 0.142 $\pm$ $1\%$ & 0.284 $\pm$ $1\%$  & 0.567 $\pm$ $1\%$  & 1.311 $\pm$ $1\%$ \\ \hline 

{Sorted linked list (PU)}  & \multirow{3}{*}{81} &  1.464 $\pm$ $1\%$ &
9.956 $\pm$ $1\%$  & 85.51 $\pm$ $2\%$ & 918.6 $\pm$ $2\%$ & 9,900 $\pm$ $3\%$ & N/A
 \\  \cline{3-8}
 (list building, traversal, & & 0.051 $\pm$ $1\%$ &0.149 $\pm$ $2\%$ & 0.613 $\pm$ $2\%$ &5.285 $\pm$ $2\%$  &45.93 $\pm$ $2\%$  & N/A\\ \cline{3-8}
 and head node removal) & & 0.005  $\pm$ $1\%$ & 0.015 $\pm$ $1\%$ & 0.044 $\pm$ $2\%$ & 0.174 $\pm$ $2\%$ & 0.720 $\pm$ $3\%$ & N/A \\  \hline \hline

Mergesort without pointers &  81 & 0.053 $\pm$ $5\%$ &  0.121 $\pm$ $5\%$  &  0.271 $\pm$ $5\%$  &  0.625 $\pm$ $4\%$  & 1.453 $\pm$ $4\%$ & 3.124 $\pm$ $4\%$
 \\  \hline

Mergesort with pointers& 81 & 0.053 $\pm$ $4\%$ &  0.122 $\pm$ $5\%$  &  0.272 $\pm$ $5\%$  &  0.638 $\pm$ $6\%$ & 1.503 $\pm$ $5\%$ & 3.201 $\pm$ $5\%$
 \\ \hline \hline
 
\end{tabular}
}
\caption{Performance of representative programs with sorted data structures measured in seconds.} \label{tab:exp2}
\end{table*}

Each experiment was run 10 times, and we report the mean time over all runs
and the corresponding deviation from the mean observed in the experiments.
The results of the experiments for working with unsorted and sorted data are 
given in Tables \ref{tab:exp1} and~\ref{tab:exp2}, respectively. 

As can be seen from the tables, each program was run on data of different 
sizes. For all linked lists programs as well as mergesort, the data size
corresponds to the number of elements in the input set, while for the
shift-reduce parser and arithmetic operations the size corresponds to the
number of arithmetic operations in the formula, which were a mix of 90\%
multiplications and 10\% additions. All linked list experiments contain two
different times, which correspond to the times to build and traverse the
linked list, respectively. The tables also report the size of field elements
in bits used to represent secret shared values. While all programs were
written to work with 32-bit integers, most programs in the table use
statistically secure comparisons, which requires the length of the field
elements to be increased by the statistical security parameter (which we set
to 48). (The size of the field elements needs to be the size
of the data plus one bit to ensure that all data values can be represented.)

The results tell us that working with linked lists (the first program in
Table~\ref{tab:exp1}) in the secure computation framework is very efficient.
That is, building a linked list that consists of thousands of elements takes
a fraction of a second. Traversing a linked list is also rather quick, where
going through a linked list of size $2^{11}$ about 1 second in our optimized
program. 

Performance of the sorted linked lists (the first two programs in
Table~\ref{tab:exp2}) characterizes performance expected from different data
structures where it is necessary to hide the place where a new node or data
item is being inserted. As previously mentioned, there is no good reason to
implement the PU variant of different data structures and it is provided
here for sorted linked lists for illustration purposes only. The DU version
of sorted linked list has the same list traversal time as the regular
(unsorted) linked lists, and the reported time for sorted linked lists can
be further optimized in the same way as it was done for regular linked
lists. When we are building a sorted linked list via DU, each operation
takes $O(n)$ time and thus the time to perform this operation for all $n$
elements of the input is $O(n^2)$. This quadratic performance is also
observed empirically where increasing the size of the data set by a factor
of 2 results in four-time increase in the list building time (all insertion
operations are performed sequentially).  As far as the head node removal
operation in a sorted linked list with PU goes, it consists of two pointer
dereferences (i.e., using data and next fields) and one call to
\texttt{pfree}, where the overhead of \texttt{pfree} was between 76.4\% to
81.3\% of that operation's time. In this particular experiment, each pointer
stores $O(n)$ locations, which contributes to the complexity of both memory
deallocation and pointer dereferencing, but the latter operation can be
performed more efficiently.

If we next look at mergesort (the last two programs in
Table~\ref{tab:exp2}), we see that the variant that uses pointers to private
data and makes a function call to a compare-and-exchange operation for each
comparison and the variant with no pointers and corresponding function calls
differ in their performance by a very small amount. The non-pointer version
that performs less work is faster by 0.4--2.4\%.

Lastly, the performance of our shift-reduce parser (the second program in
Table~\ref{tab:exp1}) is extremely fast and is almost entirely consists of
the time it takes to evaluate the provided formula on private data (the
second program in Table~\ref{tab:exp1}). That is, despite having a more
complex functionality and employing pointer-based stack, the time to perform
arithmetic operations only is almost the same as the time the parser takes.
Also, adding \texttt{pfree} to the program does not effect the runtime
(because the pointer stores a single address) and the times with
memory deallocation are omitted from the table. 

As far as memory consumption goes, the introduction of pointers to private
data only marginally affects the amount of allocated memory for programs
with pointers storing a single memory location (linked list, shift-reduce
parser, and mergesort). The amount of memory needed to store and process
sorted linked lists is quadratic in the data size and matches in its
complexity list traversal. Removing a node from the list and calling
\texttt{pfree} reduces the memory consumed by the data. While in general
calls to \texttt{pfree} can increase memory consumption, in this case all
pointers store the same lists of $O(n)$ locations and removing a node and
merging the lists in \texttt{pfree} decreases the size of each list. In
general, we can say that memory consumption is at most quadratic in the
amount of data and user-declared variables in any program.

We note that all functionalities used for our experiments have alternative
implementations using arrays. For linked lists, mergesort, and a shift-reduce
parser, we expect array-based implementation to exhibit very similar
performance to that based on pointers because all pointers store a single
location. For sorted linked lists, we expect array-based programs to have
performance similar to our data update implementation (with the same
asymptotic complexities). To confirm this finding, we evaluated performance
of array-based sorted linked lists, the source code of which can be found in
Figure~\ref{fig:sl-ar}. 
\begin{figure}[t!]\small
\begin{verbatim}
public int count = 128; 
public int main() {

   public int i, j;
   private int input[count], data[count], a, tmp, output;

   smcinput(input, 1, count);
   
   // build the sorted array
   data[0] = input[0];
   for (i = 1; i < count; i++) {
      // move the data if necessary
      a = input[i];
      for (j = 0; j < i-1; j++) {
         if (a < data[j]) {
            tmp = data[j];
            data[j] = a;
            a = tmp;
         }
      }
      data[i-1] = a;
   }
 
   // traverse the array searching for all instances of the value stored in a
   a = 5;  
   output = 0;
   for (i = 0; i < count; i++) {
      if (data[i] == a) 
         output = output+1;
   }
   smcoutput(output, 1);
   return 0;
}\end{verbatim}
\caption{Construction and traversal of a sorted list (using a static
  array).} \label{fig:sl-ar}
\end{figure}
Building the sorted list took about 20\% less time using arrays for most
data sizes, while list traversal was about 9\% slower using arrays for most
data sizes. Thus, both implementations exhibit comparable performance.
Memory consumption is also similar in most programs, with the exception of
array-based sorted list implementation that uses memory linear in the data
size.

Performance of our pointer-based programs can also be compared to that of
array-based implementations using another system or compiler. Sharemind
\cite{Bogdanov08} is a powerful system that supports a wide range of
programs and, similar to PICCO, builds on (a different type of)
information-theoretic secret sharing, which is hand-optimized to work with
three computational parties. Despite similarities of the setup, Sharemind
programs exhibit significantly different performance characteristics. In
particular, the implementation is optimized for performing a large number of
identical operations in a batch, while the cost of performing only a single
operation is high (e.g., on the order of 100ms for a single integer equality
test \cite{bog12}). As such, Sharemind programs will perform significantly
worse (i.e., orders of magnitude slower) on our programs that perform
sequential execution, such as unoptimized linked list traversal, the
shift-reduce parser, and building a sorted linked list (mergesort is also
slower as reported in \cite{zha13}). In the case of optimized linked list
traversal, on the other hand, Sharemind implementations will still be slower
for small data sets (such as $2^5$), but significantly faster for large data
sets (up to two orders of magnitude faster for $2^{20}$ elements).

All of these experiments demonstrate that pointers have a great potential
for their use in general-purpose programs evaluated over private data. Some
pointer-based data structures can exhibit substantially higher performance
in this framework than their public-data counterparts, and custom,
internally built implementations for such data structures are recommended. 

\section{Conclusions} 
\label{sec:conclusions}

In this work, we introduce the first solution that incorporates support for
pointers to private data into a general-purpose secure multi-party
computation compiler. To maintain efficiency of pointer-based
implementations, we distinguish between pointers with public addresses and
pointers with private addresses and introduce the latter only when
necessary. We provide an extensive evaluation of the impact of our design on
various features of the programming language as well as evaluate performance
of commonly used pointer-based data structures. Our analysis and empirical
experiments indicate that the cost of using pointers to private data is
minimal in many cases. Several pointer-based data structures retain their
best known complexities when they are used to store private data. Complexity
of others (most notably balanced search trees) increases due to the use of
private data flow, and custom, internally built implementations of oblivious
data structures that work with sorted data are recommended. We hope that
this work provides valuable insights into the use of various programming
language features when developing programs for secure computation using a
general-purpose compiler, as well as highlight benefits and limitations of
pointer-based designs for SMC compiler developers.

\section*{Acknowledgments}

We would like to thank Ethan Blanton for discussions at early stages of this
work and anonymous reviewers for the valuable feedback.
This work was supported in part by grants CNS-1319090 and CNS-1223699
from the National Science Foundation and FA9550-13-1-0066 from the Air
Force Office of Scientific Research. Any opinions, findings, and
conclusions or recommendations expressed in this publication are those
of the authors and do not necessarily reflect the views of the funding
agencies.

\bibliographystyle{plain}
\bibliography{PointersBib} 

\end{document}